\newcommand{\Zplus}{$\ket 0$~}
\newcommand{\Zminus}{$\ket 1$~}
\newcommand{\Xplus}{$\ket +$~}
\newcommand{\Yplus}{$\ket i$~}
\newcommand{\kket}[1]{{|#1 \rangle\rangle}}
\newcommand{\scC}{\mathcal{C}}
\newcommand{\figref}[1]{Fig.\,\ref{#1}}
\newtheorem{theorem}{Theorem}
\newtheorem{corollary}{Corollary}[theorem]
\begin{document}

\preprint{APS/123-QED}

\title{Syncopated Dynamical Decoupling for Suppressing Crosstalk in Quantum Circuits}

\author{Bram Evert}
  \email{bevert@rigetti.com}
  \affiliation{Rigetti Computing, Berkeley, CA}
\author{Zoe Gonzalez Izquierdo}
  \affiliation{Quantum Artificial Intelligence Laboratory (QuAIL), NASA Ames Research Center, Moffett Field, CA}
  \affiliation{Research Institute for Advanced Computer Science (RIACS), USRA, Moffett Field, CA}
\author{James Sud}
  \affiliation{University of Chicago Computer Science Department, IL}
  \affiliation{Research Institute for Advanced Computer Science (RIACS), USRA, Moffett Field, CA}
\author{Hong-Ye Hu}
  \altaffiliation{Presently at Harvard University.}
  \affiliation{Quantum Artificial Intelligence Laboratory (QuAIL), NASA Ames Research Center, Moffett Field, CA}
  \affiliation{Research Institute for Advanced Computer Science (RIACS), USRA, Moffett Field, CA}
\author{Shon Grabbe}
  \affiliation{Quantum Artificial Intelligence Laboratory (QuAIL), NASA Ames Research Center, Moffett Field, CA}
\author{Eleanor G. Rieffel}
  \affiliation{Quantum Artificial Intelligence Laboratory (QuAIL), NASA Ames Research Center, Moffett Field, CA}
\author{Matthew J. Reagor}
  \altaffiliation{Presently at Google Quantum AI}
  \affiliation{Rigetti Computing, Berkeley, CA}
\author{Zhihui Wang}
  \altaffiliation{zhihui.wang@nasa.gov}
  \affiliation{Quantum Artificial Intelligence Laboratory (QuAIL), NASA Ames Research Center, Moffett Field, CA}
  \affiliation{Research Institute for Advanced Computer Science (RIACS), USRA, Moffett Field, CA}

\date{\today}

\begin{abstract}
Theoretically understanding and experimentally characterizing and modifying the underlying Hamiltonian of a quantum system is of utmost importance in achieving high-fidelity quantum gates for quantum computing. In this work, we explore the use of dynamical decoupling (DD) in characterizing and suppressing undesired two-qubit couplings as well as the underlying single-qubit decoherence, both significant hurdles to achieving precise quantum control and realizing quantum computing on many hardware prototypes.  Through discrete search of dynamical decoupling sequences, we identify sequences that protect against decoherence and selectively target unwanted two-qubit interactions of general form.  On a transmon-qubit-based superconducting quantum device, we identify separate white and 1/f noise components underlying the single-qubit decoherence and a static ZZ coupling between pairs of qubits. A family of syncopated dynamical decoupling sequences is found and their efficiency demonstrated in two-qubit benchmarking experiments.  The syncopated decoupling technique significantly boosts performance in a realistic algorithmic quantum circuit.
\end{abstract}

\maketitle

\section{\label{sec:level1}Introduction}

Error suppression techniques play a crucial role in exploring applications of current noisy quantum hardware as well as in achieving error threshold for fault-tolerant quantum computing~\cite{Gottesman2009}. Among them, dynamical decoupling (DD) has emerged as a powerful strategy~\cite{Viola1998, Viola1999, Viola1999_2, Zanardi1999, Vitali1999, Wu2002, Khodjasteh2005, Biercuk2009, Souza2012}. By applying a sequence of pulses to a qubit, DD engineers an effective Hamiltonian that averages out unwanted inter-qubit couplings or the qubit interaction with the environment~\cite{Waugh1968_2}.  With its simplicity in concept, design and implementation, and great versatility in application, DD has a long track record of successfully suppressing single-qubit decoherence on different quantum device types and noise of various power spectra~\cite{Alvarez_PRL_2011,Cywinski_2017JPCM}. 

Contemporary quantum computers often suffer from unwanted interactions between qubits, deemed ``crosstalk''~\cite{Sarovar2020}. Such crosstalks are a leading error term in two-qubit gates and on idling qubits in circuit contexts. While DD can effectively suppress single-qubit decoherence, if identical periodic DD sequences are applied to coupled qubits in a {\it synchronized} manner, the crosstalk among them will not be mitigated. It has long been known that for two qubits coupled in $ZZ$ form, if a periodic decoupling sequence is applied to one of them while the other is left idle, the $ZZ$ effect on the idle qubit will be averaged out~\cite{Garbow1982, Pokharel2018}. Ref.~\cite{Tripathi2022} demonstrated this experimentally, showing that a single-qubit DD could be used to decouple crosstalk between a data qubit and idle qubits coupled to it. 

Recently, using DD to suppress ZZ crosstalk in more complex scenarios has been a topic of great interest. Ref. \cite{Zhou_PRL_2023}, demonstrated that the timing of XY4 sequences can be shifted to suppress ZZ crosstalk between pairs of qubits, while simultaneously improving coherence, and ref. \cite{Shirizly2024} explores the use of shifted XX sequences for the same effect. Refs. \cite{Mundada2023,Carrera_Vazquez_2024}, describe a more general approach of using ``staggered DD'' to suppress ZZ crosstalk while keeping the number of pulses as low as possible. 
Here, we present a general framework that incorporates these techniques and describe it as forms of ``syncopation'', which combines the use of time-shifting, frequency-doubling and operator-alternation to suppress a known network of static crosstalks. Following the spirit of the combinatorial approach for DD \cite{LidarQEC}, we translate finding the optimal sequence into a discrete optimization problem over the superoperator representation. We find that this technique can generate sequences for suppressing crosstalk in the form of a broad family of two-body couplings. The term ``syncopation'' is borrowed from music, referring to the practice of playing a rhythm off-beat, which is a useful analogy to understand the mechanism of suppression. A family of syncopated DD also retains the mechanism of suppressing dephasing on individual qubits, therefore protecting all the qubits in the system and serving as a powerful approach for crosstalk suppression in quantum systems. We also show how the concept of syncopation can be applied to crosstalk characterization by suppressing subsets of crosstalks. Finally, we combine dynamical decoupling with randomized compiling to improve the performance of an application circuit in a scalable way. 

Our experimental platform is a Rigetti Aspen chip, with fixed couplings between the qubits. This architecture suffers from a common crosstalk found on superconducting platforms of the form $e^{i\theta \text{ZZ}}$, which provides a useful test case for applying the technique. While the superconducting qubit architecture used here contains only ZZ crosstalk, other architectures demonstrate a variety of cases {\cite{sundaresan_reducing_2020, sete_floating_2021, ospelkaus_trapped-ion_2008, fang_crosstalk_2022}}. 

We first measure the magnitude of the ZZ crosstalk, and the effectiveness of syncopated dynamical decoupling in suppressing it, using Ramsey measurements on pairs of qubits. The Ramsey experiments showcase a beating pattern in the observables with coupled qubits, a signature of the $\text{ZZ}$ influence on initial state $\ket{++}$. We further demonstrate that the beating signal can be used to obtain an accurate measurement of the ZZ coupling, consistent with alternative measurements, and suggest how syncopation can be applied for large-scale, efficient measurements of crosstalks and bare qubit frequencies. By applying syncopated DD sequences, we show that the ZZ coupling and associated beating can be suppressed. Further applying DD to idle qubits in a quantum circuit for Quantum Alternating Operator Ansatz (QAOA) led to a significant boost in the circuit performance. We also demonstrate how syncopated DD can work in conjunction with randomized compilation, another error mitigation method, to provide even greater joint benefit. 

We also exploited the fact that synchronized DD, while not suppressing the crosstalk, serves as an approach to single out this crosstalk effect from single-qubit noise and hence serves as an alternative way to characterize it.  We demonstrate a precise measurement of the magnitude of the static ZZ coupling, showing high consistency with lower-level measurement based on hardware setup.  Such precise measurement opens a path to pin the bare frequency of a qubit, complementing the information provided in standard calibration where effect of surrounding qubits often factors in the reported qubit frequency.

During the preparation of this manuscript, we became aware of similar work. Ref.~\cite{Niu2024} presents a set of experiments which showcase the effectiveness of syncopated DD in protecting pairs of qubits. This is similar to our results in Section \ref{subsec:2_qubit_zz}, but explores a wider variety of sequences. Ref.~\cite{Seif2024} introduces an compilation scheme called Context-Aware Dynamical Decoupling (CA-DD) which is similar to our syncopated compilation scheme described in Sections \ref{subsec:application} and \ref{sec:syncopation-on-a-graph}. Finally, after these simultaneous papers, \cite{coote_resource-efficient_2024} described a more general method of finding multi-qubit sequences called GraphDD, which considers the common situation in which qubits are not idle for identical periods of time.

\section{\label{sec:Background}Sequence design for dynamical decoupling}
The theory of DD originated in the field of nuclear magnetic resonance (NMR)~\cite{Hahn1950, Carr1954, Meiboom1958, Waugh1968, Waugh1968_2, Waugh1982, Garbow1982, Vandersypen2005}, as a way to enhance the coherence time of a collection of nuclear spins, by utilizing fast control pulse sequences to average out the effects of noise. Its development has been quickly gaining momentum thanks to the advances in quantum technology; the ability to precisely control individual qubits has allowed the use of DD to prolong their coherence time. A variety of DD sequences have been designed~\cite{Maudsley1986, Guillon1990, Biercuk2011,Cywinski_2017JPCM} to target the noise spectra that arise for different physical realizations of qubits.

Our primary target in this study is a static coupling between qubits, such that the crosstalk can be modeled by a constant Hamiltonian $H$. We thus will focus on the operator aspect of the DD design, i.e., identifying a set of single-qubit operators (pulses) that has the potential of achieving a decoupled effective Hamiltonian.
To generalize into the non-static case, the positioning of these pulses could be adjusted from studying the temporal features of the crosstalk.


In this section we first detail the method of designing DD sequences for an arbitrary $H$ through discrete optimization in the superoperator representation, then introduce the concept of syncopation, and finally focus on the ZZ crosstalk for superconducting quantum devices.

\subsection{DD design in superoperator representation}
\label{sec:superoperator} 

The Pauli operators comprise an orthogonal basis in the Hilbert space, and any Hermitian operator $\text{A}\in \mathds{C}^{2\times 2}$ can be represented as a vector, $\text{A}=\alpha I+\beta X +\gamma Y +\zeta Z$ or $\kket{A}=(\alpha,\beta,\gamma,\zeta)$, where $\alpha,\beta,\gamma,\zeta \in \mathbb{R}$ \footnote{Note here the basis is not normalized. It can be normalized by assigning a prefactor of $1/\sqrt{D}$ where $D$ is the dimension of the Hilbert space.}.
For a two-qubit system, the basis is composed of 16 Pauli operators, $\{I,X,Y,Z\}^{\otimes2}$. For example, the constant $ZZ$ crosstalk is corresponds to vector $\vec v=(0,\cdots,0,1)$.
A unitary quantum channel $\scC_{u}[\text{A}]=\text{U}\text{A} \text{U}^{\dagger}$ can be represented as a unitary matrix $\hat{\text{U}}$ in this basis, with its action represented as matrix multiplication, $\hat{\text{U}}\kket{\text{A}}$, which is also called superoperator representation. 
The problem of DD sequence finding is then translated into a matrix optimization problem.  The goal for the search is to arrive at a zero vector for the targeted crosstalk term (corresponding to zero average Hamiltonian), by optimizing a binary matrix in Pauli basis with each input indicating whether to apply a specific operator.
 
To construct sequences, we usually consider a finite set of pulses, for example, single-qubit $\pi$ or $\pi/2$ pulses, which are native gates on many quantum platforms. We can thus work within an operator subspace that is invariant under conjugation by the pulse set. 
Let us use the Heisenberg interaction, $H = X_1X_2+Y_1Y_2+Z_1Z_2$, as a pedagogical example:

The subspace spanned by $X_1X_2$, $Y_1Y_2$ and $Z_1Z_2$ is invariant under the action of $\pi$ pulses. The representation of $\pi$ pulses will be diagonal matrices. 
In \figref{fig:frame2} (a), the (initial) crosstalk Hamiltonian, represented by $\vec v_0=(1,1,1)$, is the first column. A pulse $\pi_X$ on the second qubit is represented as the 3-by-3 matrix in \figref{fig:frame2} (b) to get the updated vector representation, $v_1$. The concatenation of the vectors, $[v_0;v_1;v_2;\cdots],$ as shown in \figref{fig:frame2} (a) is also called  toggling-frame sequence representation \cite{PhysRevX.10.031002}. 
Suppose we choose a sequence of pulses $\hat{P}_1,\dots,\hat{P}_L$, at times $\Delta t,\dots,L\Delta t$, the time-averaged Hamiltonian would be $\bar{v}=\Delta t\sum_{l=0}^{L-1}v_l$, where $v_{l}=\hat{P}_{l}v_{l-1}$ for $l=1,2,\cdots,L$. If the time window between pulses, $\{\Delta t_l\}$, is not uniform, then $\bar{v}=\sum_{l=0}^{L-1}v_l \Delta t_{l}$. 
The goal of DD for crosstalk suppression is to get to $\bar{v}=\vec{0}$. 
For the Heisenberg Hamiltonian, a DD sequence $(I\pi_X, \pi_X\pi_Y, I\pi_X, \pi_X\pi_Y)$ achieves such a goal, as the sum of each row in \figref{fig:frame2} (a) is zero. Using the superoperator and toggling-frame representation, one can translate the DD sequence design problem into a discretized optimization problem.

In Table ~\ref{tab:syncopation-matrix}, we tabulate the DD sequences composed of $\pi_x$ and $\pi_y$ pulses that eliminate individual or  XX, YY or ZZ couplings (or combinations thereof) in a ``syncopation matrix''. 
Note that such syncopation sequences would work for all Heisenberg-like Hamiltonians, $c_1 XX+c_2 YY+c_3ZZ,~c_1,c_2,c_3\in\mathbb{R}$.
A pulse sequence $P_1P_2\cdots P_n$ applies pulses at time $(t/n,2t/n,\ldots,t)$ while in a CPMG pulse sequence $P_1P_2\cdots P_n$-CPMG, pulses are applied at $t/n,3t/2n,\ldots,(n-1)t/n$, demonstrating syncopation when used in conjunction.
This matrix is not intended to be comprehensive, but rather to provide a reference for the simplest sequences through which some commonly seen coupling terms that can be eliminated. The specified sequence may decouple more terms, but we look only at XX, YY, and ZZ, and there could be DD sequences that achieve the same decoupling but have not been specified.

\begin{figure}[htbp]
    \centering
    \includegraphics[width = 1\linewidth]{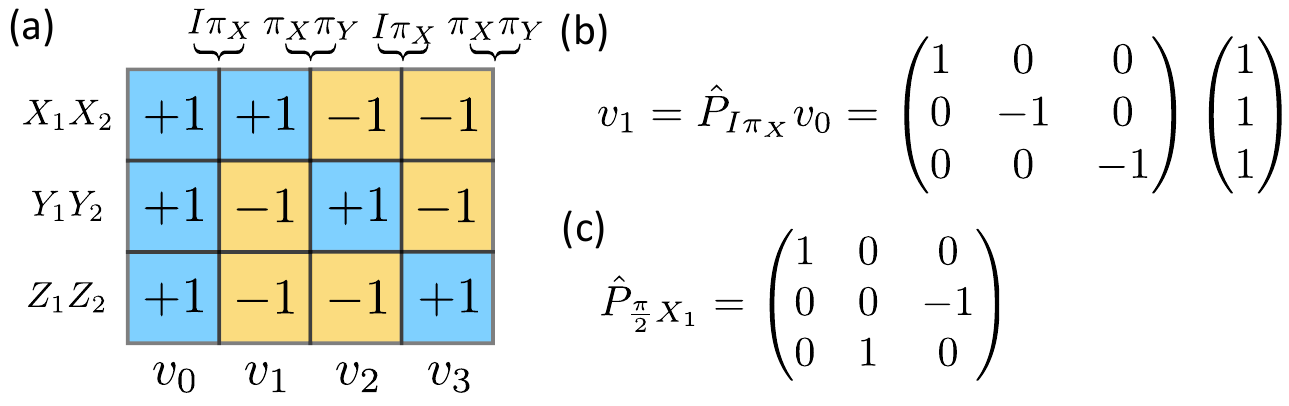}
    \caption{In Pauli basis $\{XX, YY, ZZ\}$, demonstration of DD sequence $(\text{I}\pi_X, \pi_X\pi_Y, \text{I}\pi_X, \pi_X\pi_Y)$ for the Heisenberg Hamiltonian. 
    (a) Toggling-frame sequence representation.  The initial (original) Hamiltonian is represented as the first column.  Each following column shows the updated Hamiltonian after each DD pulse applied in sequential order.
    The zero sum of each row indicates that the corresponding term is averaged out in the effective Hamiltonian, i.e., decoupling achieved.
    (b) Action of the first pulse $I\pi_X$ updated $\vec v_0$ into $\vec v_1$.
    (c) Matrix representation of a $\frac{\pi}{2}$ pulse.
    }
    \label{fig:frame2}
\end{figure}

The Heisenberg-like Hamiltonian is a special case where the three basis operators commute, and hence the DD is exact.  In a general case, the zero-vector goal in an appropriate subspace would still achieve decoupling to the first order in Magnus expansion.  Group structure and features of Pauli operators can be further leveraged to reduce the search space.
Another commonly seen case is when the Hamiltonian also includes single qubit operators. The subspace to consider is beyond the one spanned by XX, YY, and ZZ.  We list a few such Hamiltonians and their working DD sequences as the following:

\begin{itemize}
    \item $H=X_1X_2+Y_1Y_2+Z_1Z_2+Z_1+Z_2:$\\
    DD: $(\pi_x I, I\pi_y, I\pi_y, \pi_y\pi_x, \pi_x I, \pi_y\pi_x, I\pi_y, I\pi_y)$
    \item $H=X_1X_2+Y_1Y_2+Z_1Z_2+Z_1+Z_2+X_1+X_2:$\\
    DD: $(\pi_y \pi_z, \pi_y \pi_x,\pi_y\pi_x,\pi_y\pi_z, \pi_y \pi_x,\pi_y\pi_z , \pi_y\pi_z,\pi_y \pi_x)$
    \item $H=X_1X_2+Y_1Y_2+Z_1Z_2+Z_1+Z_2+Y_1+Y_2+X_1 + X_2:$\\
    DD: $(\pi_y \pi_z,\pi_x\pi_y ,\pi_y \pi_z,\pi_y\pi_z ,\pi_y \pi_z,\pi_x\pi_y ,\pi_y\pi_z, \\
    \pi_y\pi_z ,\pi_y \pi_z,\pi_x\pi_y ,\pi_y\pi_z, \pi_x \pi_y)$\;.
\end{itemize}

\begin{table*}
    \caption{The syncopation matrix identifies which sequences syncopate with each other, for the XX, YY or ZZ static couplings up to length 4.
    X and Y in the sequence refers to $\pi_x$ and $\pi_y$ pulses.
    In a XX sequence pulses are applied at $(t/2, t)$.  In a XX-CPMG sequence pulses are applied at $(t/4,~3t/4)$. 
    In a XXXX-CPMG sequence pulses are applied at $(t/8,~3t/8,~5t/8,~7t/8)$. 
    Each matrix entry indicates the coupling terms that are averaged out.
    }
    \label{tab:syncopation-matrix}
    \begin{adjustbox}{width=\textwidth}
       \begin{tabular}{|r|l|l|l|l|l|l|l|l|l|l|l|l|}
        \textbf{Qubit 1 Sequence} & \textbf{X}X & \textbf{XX-CPMG} & \textbf{XXXX} & \textbf{XXXX-CPMG} & \textbf{XYXY} & \textbf{XYXY-CPMG} & \textbf{YXYX} & \textbf{YXYX-CPMG} & \textbf{YY} & \textbf{YY-CPMG} & \textbf{YYYY} & \textbf{YYYY-CPMG} \\
        \textbf{Qubit 0 Sequence} &  &  &  &  &  &  &  &  &  &  &  &  \\
        \hline
        \textbf{XX} &  & YY, ZZ & YY, ZZ & YY, ZZ & XX, YY, ZZ & XX, ZZ & XX, ZZ & XX, ZZ & XX, YY & XX, YY, ZZ & XX, YY, ZZ & XX, YY, ZZ \\
        \textbf{XX-CPMG} & YY, ZZ &  & YY, ZZ & YY, ZZ & XX, ZZ & XX, ZZ & XX, YY, ZZ & XX, ZZ & XX, YY, ZZ & XX, YY & XX, YY, ZZ & XX, YY, ZZ \\
        \textbf{XXXX} & YY, ZZ & YY, ZZ &  & YY, ZZ & XX, YY & XX, YY, ZZ & XX, YY & XX, YY, ZZ & XX, YY, ZZ & XX, YY, ZZ & XX, YY & XX, YY, ZZ \\
        \textbf{XXXX-CPMG} & YY, ZZ & YY, ZZ & YY, ZZ &  & XX, YY, ZZ & XX, YY & XX, YY, ZZ & XX, YY & XX, YY, ZZ & XX, YY, ZZ & XX, YY, ZZ & XX, YY \\
        \textbf{XYXY} & XX, YY, ZZ & XX, ZZ & XX, YY & XX, YY, ZZ &  & ZZ & XX, YY & ZZ & YY, ZZ & XX, YY, ZZ & XX, YY & XX, YY, ZZ \\
        \textbf{XYXY-CPMG} & XX, ZZ & XX, ZZ & XX, YY, ZZ & XX, YY & ZZ &  & ZZ & XX, YY & YY, ZZ & YY, ZZ & XX, YY, ZZ & XX, YY \\
        \textbf{YXYX} & XX, ZZ & XX, YY, ZZ & XX, YY & XX, YY, ZZ & XX, YY & ZZ &  & ZZ & XX, YY, ZZ & YY, ZZ & XX, YY & XX, YY, ZZ \\
        \textbf{YXYX-CPMG} & XX, ZZ & XX, ZZ & XX, YY, ZZ & XX, YY & ZZ & XX, YY & ZZ &  & YY, ZZ & YY, ZZ & XX, YY, ZZ & XX, YY \\
        \textbf{YY} & XX, YY & XX, YY, ZZ & XX, YY, ZZ & XX, YY, ZZ & YY, ZZ & YY, ZZ & XX, YY, ZZ & YY, ZZ &  & XX, ZZ & XX, ZZ & XX, ZZ \\
        \textbf{YY-CPMG} & XX, YY, ZZ & XX, YY & XX, YY, ZZ & XX, YY, ZZ & XX, YY, ZZ & YY, ZZ & YY, ZZ & YY, ZZ & XX, ZZ &  & XX, ZZ & XX, ZZ \\
        \textbf{YYYY} & XX, YY, ZZ & XX, YY, ZZ & XX, YY & XX, YY, ZZ & XX, YY & XX, YY, ZZ & XX, YY & XX, YY, ZZ & XX, ZZ & XX, ZZ &  & XX, ZZ \\
        \textbf{YYYY-CPMG} & XX, YY, ZZ & XX, YY, ZZ & XX, YY, ZZ & XX, YY & XX, YY, ZZ & XX, YY & XX, YY, ZZ & XX, YY & XX, ZZ & XX, ZZ & XX, ZZ &  \\
        \end{tabular}
    \end{adjustbox}
\end{table*}

The theory accommodates any set of Pauli operators or more complex types of couplings as pulses.  Allowing $\pi/2$ pulses (See Fig.~\ref{fig:frame2} (c) for its representation in the example) can also enrich the plethora of functioning DD sequences potentially of shorter lengths. Furthermore, we have limited ourselves to decoupling sequences that form an identity - they can be inserted into the circuit without additional compiling. If compilation with neighboring gates are taken into consideration, a wider range of sequences is available. 
Taking for example a circuit structure of alternating layers of 1Q and 2Q gates, where the 1Q layers can express any SU(2) rotation. As discussed, a common source of idle time is during nearby 2Q gates, meaning that the idle qubits will have a fully expressive SU(2) rotation on either side of the idle time. This opens up the possibility for non-identity sequences that can target a wider variety of couplings, with potentially fewer pulses. The correction to the identity needs to only be compiled into the following 1Q layer. 

\subsection{Syncopation: time-shifting, frequency-doubling, and more}
A DD sequence (XX,XX-CPMG) as shown in Table~\ref{tab:syncopation-matrix} and presented in previous studies, is a \emph{shifted} syncopation.  Specifically, the two sequences have the same number of pulses but syncopate in timing.

A sequence (XX,XXXX) achieves the same decoupling effect as (XX, XX-CPMG). 
We dub this type of sequence \emph{frequency-doubling} syncopation, which was first proposed in Ref.~\cite{Paz-Silva2016}.  Here frequency-doubling refers to that one DD sequence has twice as many (or any even integer multiplier of) pulses as the other.

The DD sequence presented in Sec.~\ref{sec:superoperator}, $(I\pi_X, \pi_X\pi_Y, I\pi_X, \pi_X\pi_Y)$ that decouples the Heisenberg-like Hamiltonian corresponds to a XX sequence on qubit 1 and XYXY on qubit 2, i.e., (XX,XYXY). This incorporates the techniques of operator alternation and frequency-doubling syncopation.

\subsection{DD targeting ZZ crosstalk}
We identify a family of {\it syncopated} DD sequences capable of decoupling the $ZZ$ crosstalk that is the focus of the experimental part of this study. We describe in detail two approaches of achieving syncopation. We limit the DD sequence to be composed of only $\pi_x$ and $\pi_y$-pulses (X or Y operators) which are experimentally realizable on our platform.  For a two qubit system, the ``frequency multiplication'' syncopation applies a periodic DD pulse sequence to each qubit, but the number of pulses on one qubit is an even multiplier of the other. This is illustrated in Fig.~\ref{fig:two-qubit-XXXX-XX-cartoon}, where we show the (XXXX, XX) scheme; in a ``shifted" syncopation scheme, (XX-CPMG, XX), two pulses are applied to each qubit in an off-beat fashion, as illustrated in Fig.~\ref{fig:two-qubit-XX-XX-cartoon}. The shifted scheme is optimal in the number of pulses. As shown in Table \ref{tab:syncopation-matrix}, 
a number of syncopation schemes can achieve decoupling of ZZ.
Decoupling of XX or YY terms can be achieved by ``operator-alternation" using sequences such as (XYXY, XXXX), but ZZ crosstalk requires syncopation in pulse timing.
If physical $\pi_z$ pulses are supported, from Table~\ref{tab:syncopation-matrix} we can derive from operator cycling that many synchronized sequences, like (XX, ZZ) would also decouple $ZZ$, however, to also protect single qubits from individual $Z$ terms, syncopation in timing is necessary.

We illustrate how such syncopated DD sequences remove $ZZ$ as well as suppress single qubit dephasing by observing the phase accumulation on each qubit, in comparison with the synchronized DD (see Fig.~\ref{fig:two-qubit-XXXX-XXXX-cartoon}) and neighbor only DD (see Fig.~\ref{fig:two-qubit-XXXX-idle-cartoon}) which only removes part of the noise terms. These DD sequences will also be carried out and compared in our benchmarking experiments and discussed throughout the paper.

\subsection{DD targeting other crosstalks}

Our theory generates DD sequences for arbitrary static crosstalk form.  While the device we carried out our experiments in Sec.~\ref{sec:Results} mainly suffers from ZZ crosstalk 
other quantum computer designs give rise to different types of crosstalk.
For example, 
the tunable coupler architecture adopted for superconducting transmon qubits \cite{sete_floating_2021} allows the interaction strength between qubits to be modulated. However, the effective two-qubit Hamiltonian contains both a ZZ term and an XX+YY term and the terms cannot be minimized at the same operating point. This leads to a residual crosstalk for idling qubits. While the strength of this interaction on current tunable-coupler devices is relatively small, it can become more significant as qubit coherence improves. 
This interaction 
takes the form $XX+YY$ or $XY-YX$ depending on the phase of the rotating frame. 
Our theory found the (XX, XYXY) sequence is appropriate for $XX+YY$ crosstalk and (XX, XXXX) sequence for $XY-YX$, 
and the simplest sequence that targets both crosstalk forms is (XXXX, YXYX).

Another example comes from the growing interest in long-range bus couplers \cite{majer_coupling_2007, stassi_scalable_2020}. Such couplers are particularly interesting in error correction codes where long-distance couplings can be valuable, however like the architecture we study here, these coupler designs are typically static meaning they will generate relatively significant crosstalks. Moreover, proposed designs have relatively dense fabrics with long-range connectivity resulting in more complex crosstalk networks. The design proposed by \cite{majer_coupling_2007} produces an XX interaction. The idling crosstalk of this coupling can be suppressed using the syncopated sequences such as (XX, XYXY) as seen in table \ref{tab:syncopation-matrix}. As quantum error correction circuits naturally have relatively long idle periods during measurement, we expect that syncopated decoupling schemes may be particularly useful here.

Other types of crosstalk may call for modifications or generalization of our technique.
An example is the crosstalk present in ion trap quantum devices. 
The entangling gate commonly used in this architecture is Mølmer-Sørensen (MS) gate. During a MS gate between qubit 1 and qubit 2, $X_1X_2(\theta)$, the leading order crosstalk between qubit 1 and its neighboring idle qubit 3 takes the form $X_1 ( \cos\phi X_3 + \sin \phi Y_3 )$.\cite{fang_crosstalk_2022}
The refocusing scheme proposed in Ref.\cite{ParradoRodriguez2021crosstalk} is a spin echo on the idle qubit which is a special syncopated sequence ($IZ$).  While a big family of syncopated DD can eliminate this coupling form, the fact that qubit 1 is subject to the two-qubit gate and cannot readily take a pulse hinders the direct application of these DD schemes. Optimization techniques further incorporates the gate will be need.

\begin{figure*}
    \centering
    \begin{subfigure}{0.235\textwidth}
        \centering
        \includegraphics[width=\textwidth]{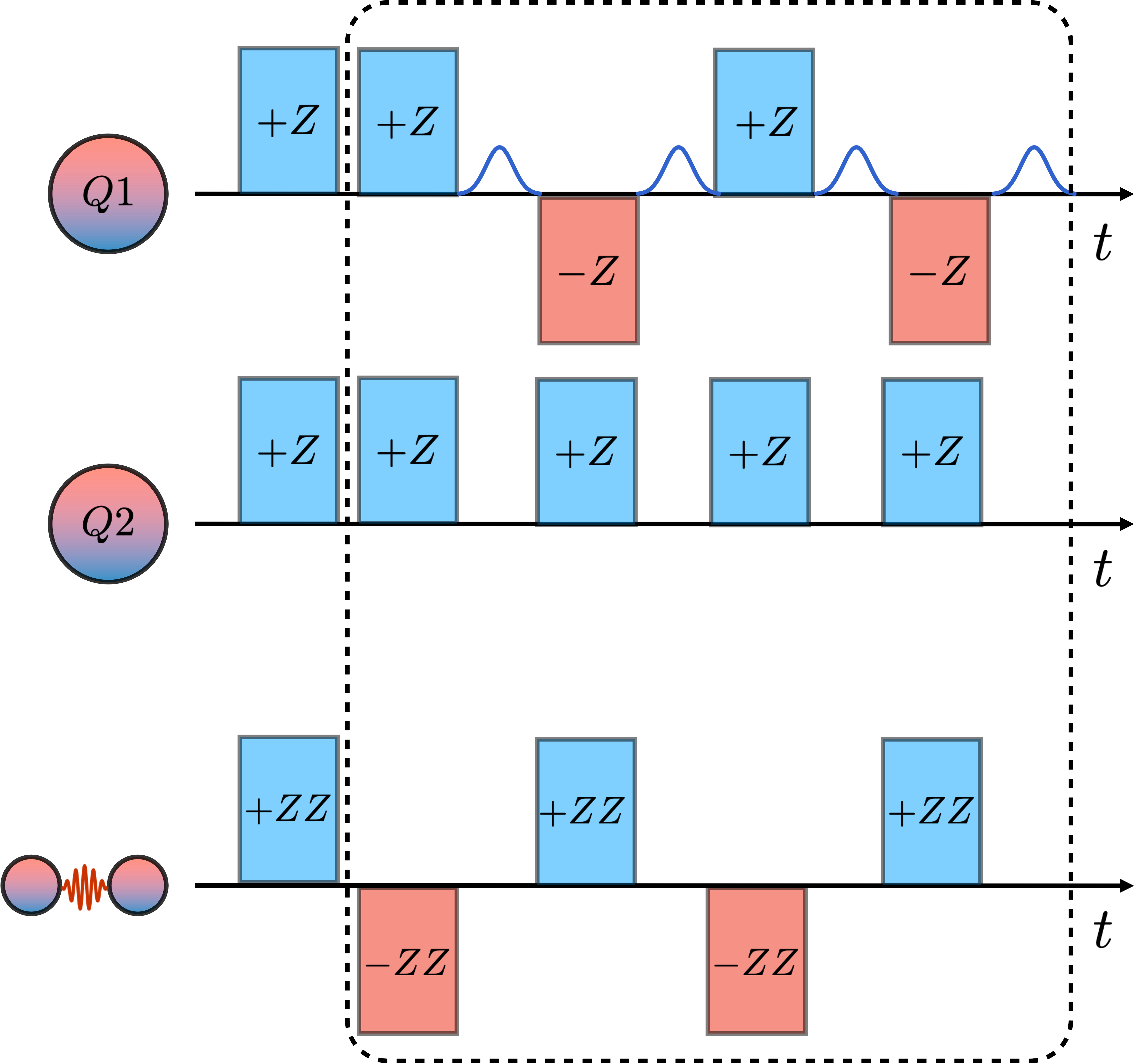}
        \caption{}
        \label{fig:two-qubit-XXXX-idle-cartoon}
    \end{subfigure}
    ~ 
    \begin{subfigure}{0.235\textwidth}
        \centering
        \includegraphics[width=\textwidth]{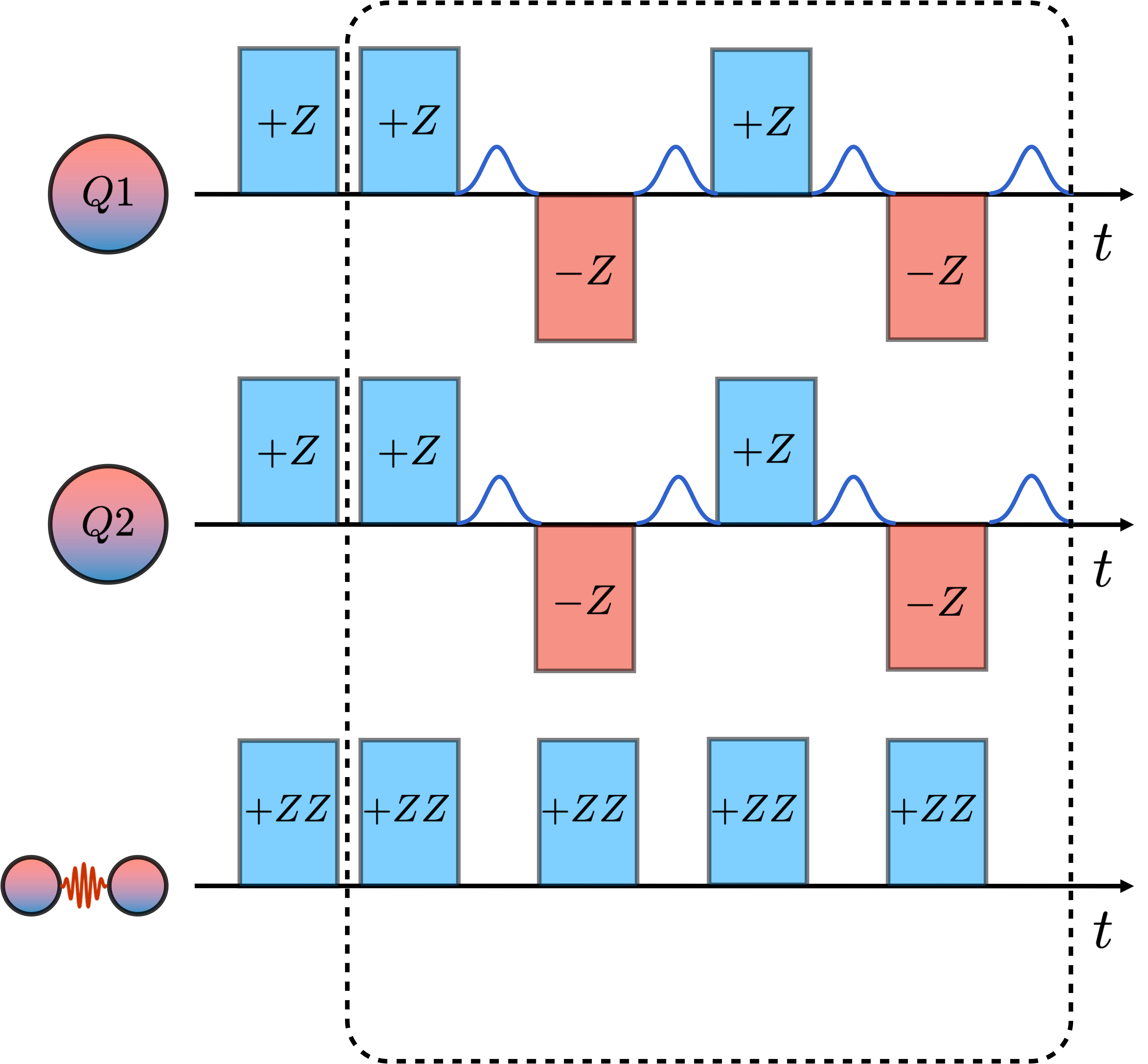}
        \caption{}
        \label{fig:two-qubit-XXXX-XXXX-cartoon}
    \end{subfigure}
    ~
    \begin{subfigure}{0.235\textwidth}
        \centering
        \includegraphics[width=\textwidth]{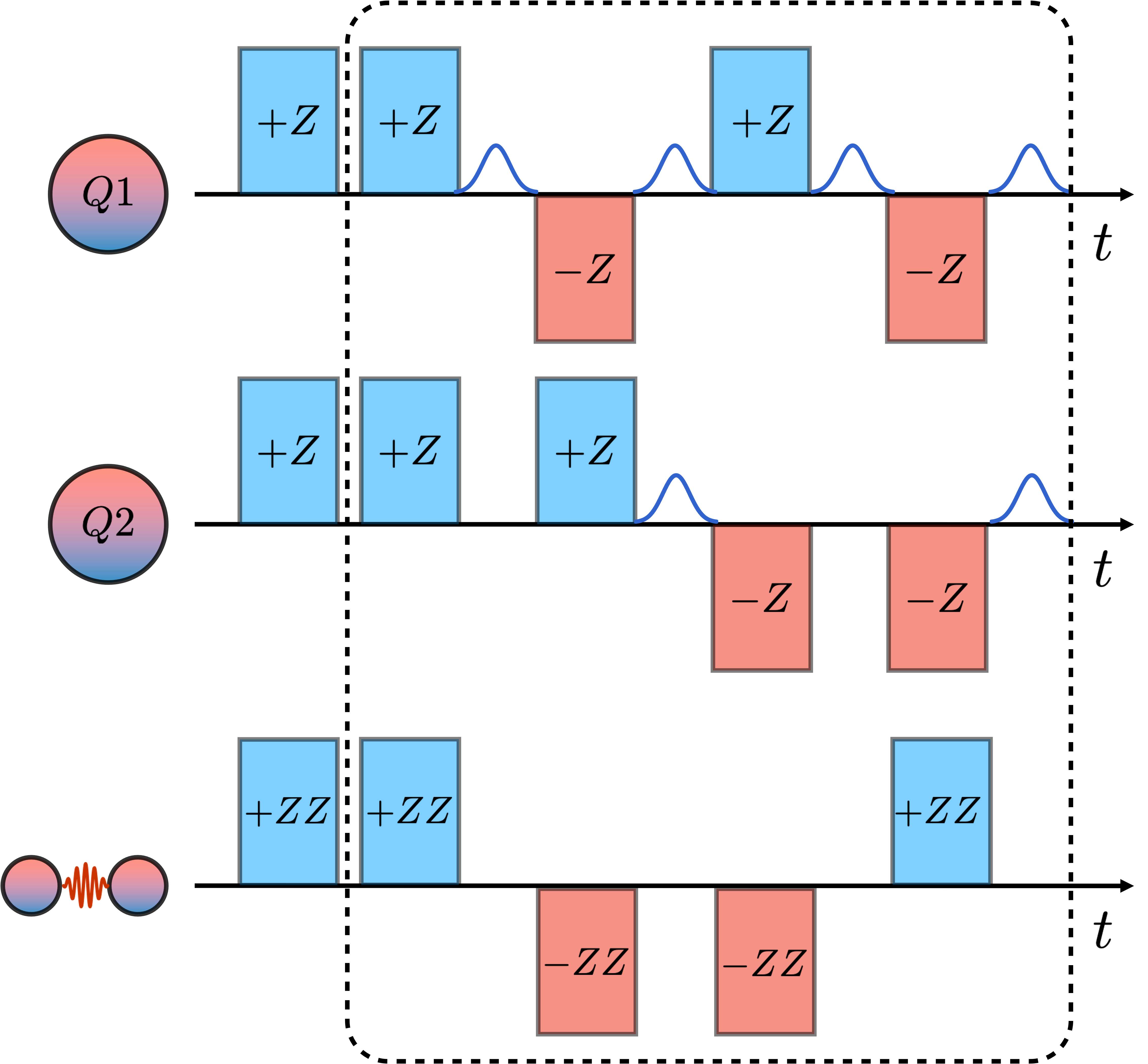}
        \caption{}
        \label{fig:two-qubit-XXXX-XX-cartoon}
    \end{subfigure}
    ~
    \begin{subfigure}{0.235\textwidth}
        \centering
        \includegraphics[width=\textwidth]{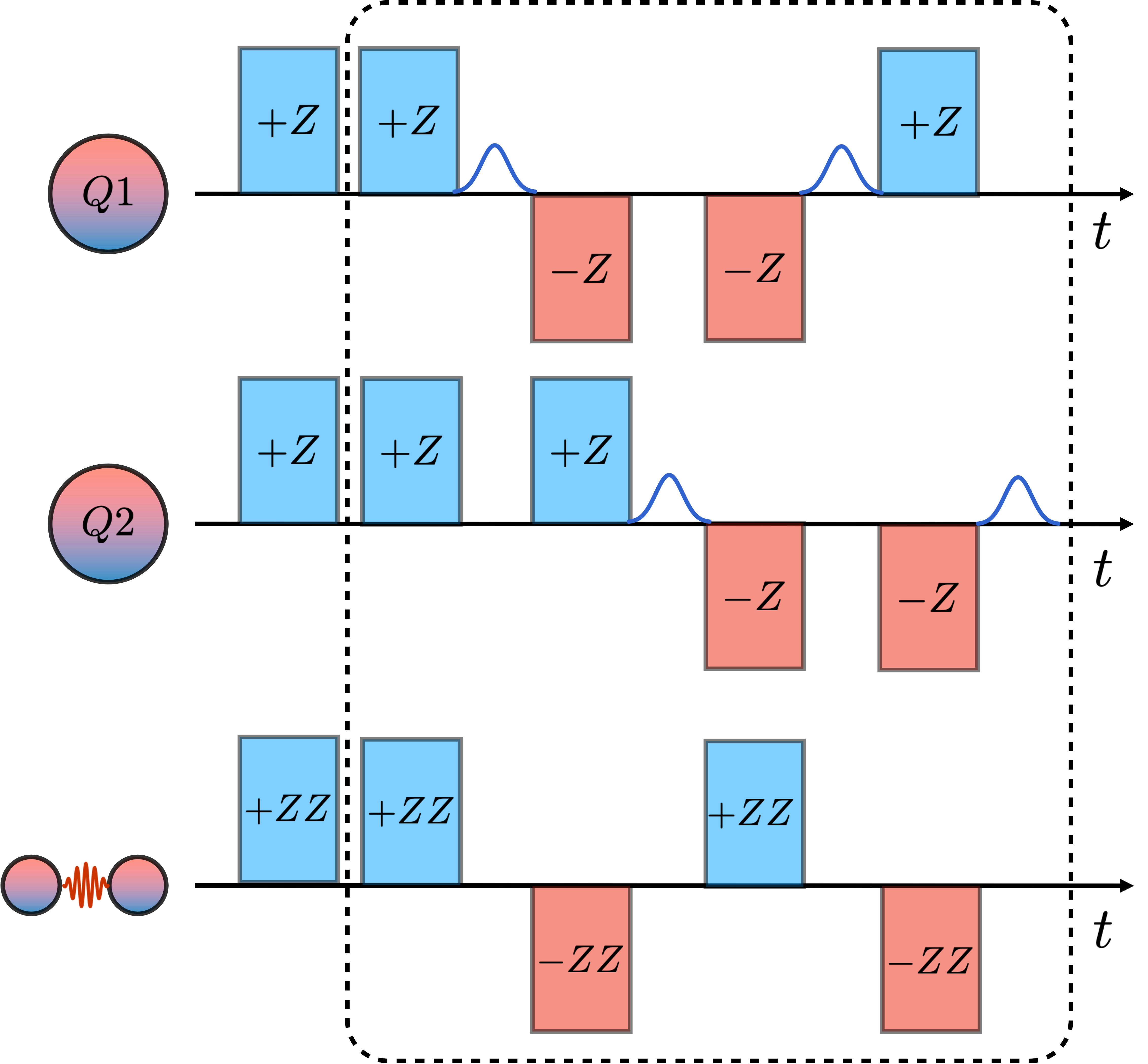}
        \caption{}
        \label{fig:two-qubit-XX-XX-cartoon}
    \end{subfigure}
  \caption{Illustration of DD schemes $(x,y)$ on a two-qubit system subject to a $ZZ$ coupling. DD sequence $x$ and $y$ are applied to each qubit individually. {\bf (a)} Scheme (XXXX,NONE). Decoherence on the first qubit, as well as $ZZ$ coupling, are canceled out by the DD sequence, while decoherence on the second qubit remains. {\bf (b)} Synchronized DD Scheme (XXXX,XXXX). When the same sequence is applied to both qubits synchronously, individual decoherence is removed, but the $ZZ$ coupling between them is unaffected.{\bf (c)} Syncopated DD scheme (XXXX, XX). This scheme averages out $ZZ$ coupling as well as single-qubit decoherence on both qubits. {\bf (d)} Syncopated DD scheme (XX-CPMG, XX). Shifting one sequence can also achieve syncopation, with fewer pulses overall. The pulse duration is exaggerated for illustration purposes.}
  \label{fig:dd_sequences}
\end{figure*}

\section{\label{sec:Results}Results}

\begin{figure*}
    \centering
    \begin{subfigure}{0.25\linewidth}
        \includegraphics[width=\columnwidth]{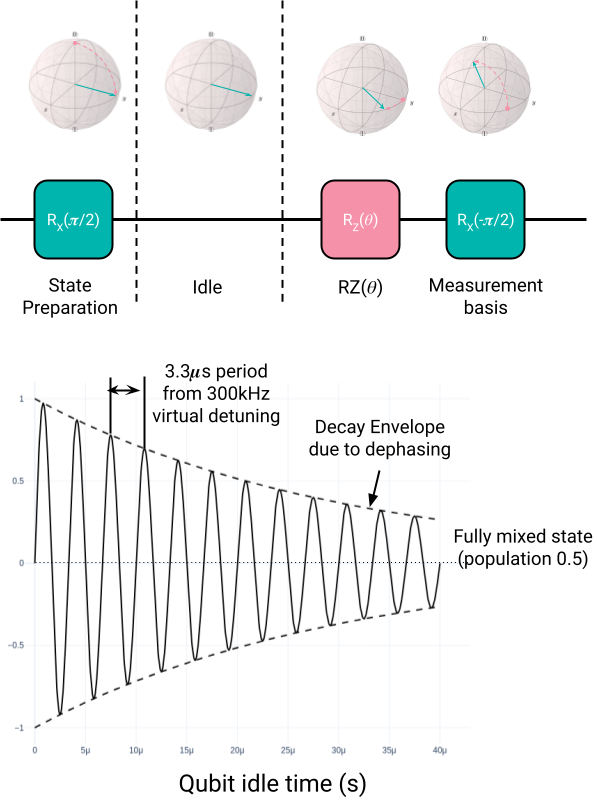}
        \caption{}
        \label{fig:ramsey-circuit-diagram}
    \end{subfigure}%
    \begin{subfigure}{0.375\linewidth}
        \centering
        \includegraphics[width=\textwidth]{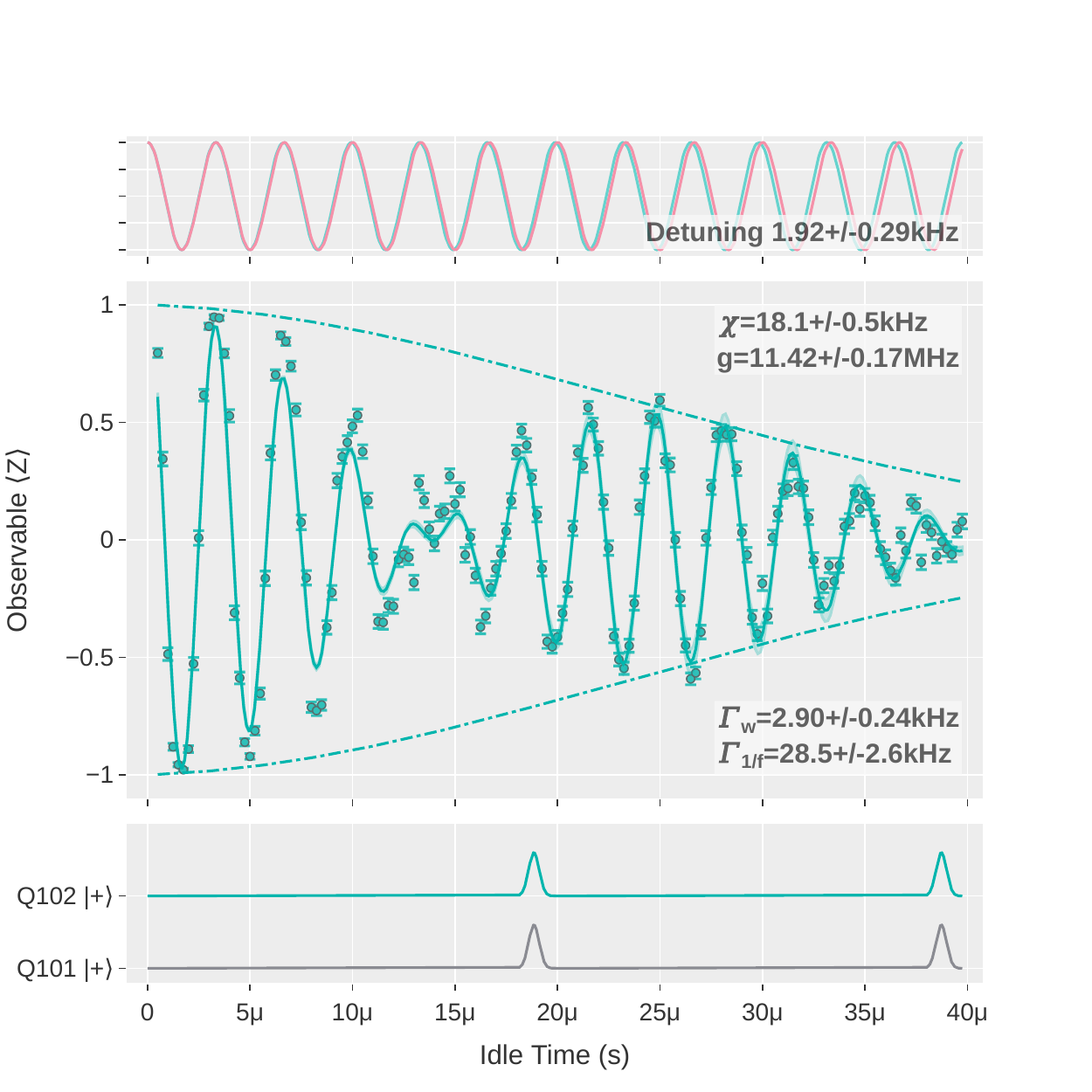}
        \caption{}
        \label{fig:Qubit-102-1-X+_X+-XX-XX-std-std}
    \end{subfigure}%
    \begin{subfigure}{0.375\linewidth}
        \centering
        \includegraphics[width=\textwidth]{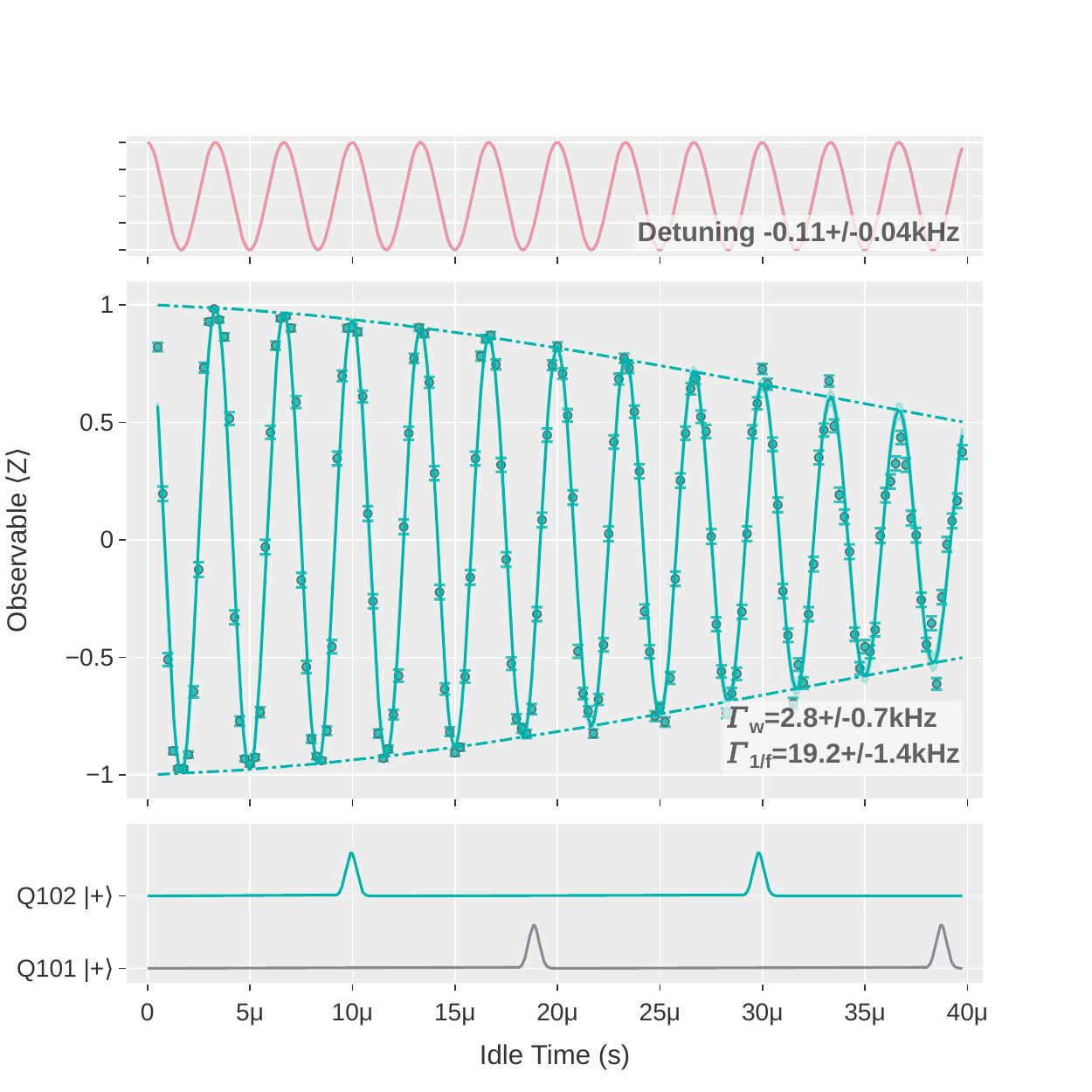}
        \caption{}
        \label{fig:Qubit-102-1-X+_X+-XX-XX-std-cpmg}
    \end{subfigure}
    \caption{The experimental setup is shown in {\bf(a)}. The qubit is prepared in the \Xplus state, remains there for some idle time, in which a decoupling sequence is applied, and undergoes the pre-measurement rotation. The result should be a characteristic Ramsey decay curve, where the physical detuning can be extracted from the frequency of the oscillation and the dephasing rate can be extracted from the envelope. When the neighbour is in the {\bf(b)} \Xplus state, the detuning is negligible but a characteristic beating frequency of 18.1Khz is visible. With the {\bf(c)} syncopated DD sequences, the characteristic beating is suppressed and we recover the expected curve. The $\Gamma_{1/f}$ dephasing rate is reduced from $28.5\pm2.6$kHz to $19.2\pm1.4$kHz, indicating improved protection from decoherence.}
    \label{fig:ramsey-synchronized-vs-syncopated}
\end{figure*}

In the Rigetti Aspen architecture~\cite{Caldwell2018}, two transmon qubits are connected by a fixed capacitive coupler, which leads to an effective $ZZ$ term between the pair \cite{didier_analytical_2018}. The coupling is necessary for 2-qubit gate operation, but remains present while qubits are idling and leads to an unwanted term in the Hamiltonian. Since the source of the coupling is a physical capacitance on the chip, it can be relied on to be static over both the timescale of experiments and the lifetime of the chip. This known source of static crosstalk provides a testbed to apply syncopated dynamical decoupling.

\subsection{Benchmarking: Measuring and mitigating crosstalk in a two-qubit system }\label{sec:benchmarking}
Being able to precisely measure the $ZZ$ coupling magnitude not only helps understand low-level qubit physics and informs hardware design, but also can inform tailoring of error suppression techniques applied to quantum circuits. We begin by applying dynamical decoupling sequences first, to measure the magnitude of a static ZZ coupling and, second, suppress it on the pair of qubits.

\subsubsection{Measuring decoherence}
\label{subsec:detuning}

In order to study the decoherence on individual qubits, we make use of Ramsey $T_2^*$ experiments~\cite{Chiorescu2003}. In these experiments, a qubit is prepared in a superposition state, typically \Xplus or \Yplus, followed by an idle time during which the qubit is subject to noise and (if applied) DD. Finally, a Z-rotation proportional to the idle time is applied, implementing a ``virtual detuning'', and the resulting state is projected onto the measurement axis with the inverse of the preparation operator. The circuit and expected decay is pictured in  Fig.~\ref{fig:ramsey-circuit-diagram}. To obtain a reliable fitting of Eq.~\eqref{eq:fit}, we set a virtual detuning of $\omega_0=300$kHz, which is about an order of magnitude faster than the physical detunings and allows us to resolve dozens of oscillations in our expected dephasing times of 10-100$\mu$s.

The resulting curve without DD can be described by evolution of the expected value

\begin{equation}
\label{eq:fit}
\langle Z (t)\rangle = e^{-\Gamma_{w} t} e^{-\Gamma_{1/f}^2 t^2} \cos{[(\omega_0 + \delta \omega )t]},  
\end{equation}

where $\delta \omega$ is the physical detuning and $\omega_0$ is an introduced virtual detuning. The two decay factors are the empirical decoherence form that prevails on the hardware, an exponential decay due to white noise and stretched exponential decay due to $1/f$-noise, at a decay rate $\Gamma_{w}$ and $\Gamma_{1/f}$, respectively~\cite{Bylander2011}. The observable, $\langle Z (t)\rangle$, is calculated upon measurement in the $Z$ basis. In these experiments, readout mitigation is applied using the method of Ref.~\cite{nachman_unfolding_2020}. 

\subsubsection{\label{subsec:2_qubit_zz}Demonstrating and measuring the effect of ZZ coupling}
We then examine the idle crosstalk between two qubits. The two qubits are initiated in $|+\rangle \otimes |+\rangle$ state, and the observable value of each qubit in a Ramsey experiment is measured. Synchronized DD is applied to the pair, meaning that each qubit experiences precisely the same schedule of decoupling pulses. The decoupling pulses ensure that the crosstalk effect of other nearby qubits is eliminated, but any ZZ coupling between the pair remains, as illustrated in Fig ~\ref{fig:two-qubit-XXXX-XXXX-cartoon}. 

Adding the effect of a $ZZ$ coupling $J$ to the model described by Eq.~\eqref{eq:fit}, we expect the observable to evolve according to
\begin{equation}
\label{eq:fit2}
\langle Z (t) \rangle = e^{-\Gamma_w t} e^{-\Gamma_{1/f}^2 t^2} \cos{[(\omega_0 + \delta \omega )t]} \cos{(\frac{J}{2} t)}\;,  
\end{equation}
where the decay factors similar to those in Eq.~\eqref{eq:fit} account for the single-qubit decoherence under single-qubit DD, and the rest is derived from the quantum evolution of the initial state under the Hamiltonian
\begin{equation}
\label{eq:Hamiltonian}
H = J Z_1Z_2 + (\omega_0+\delta\omega) Z_1\;.
\end{equation}

The $ZZ$ crosstalk between the qubit pair manifests as a beating in the Ramsey measurement, at a frequency proportional to the crosstalk magnitude, $J$. Such beating is observed experimentally, see Fig.~\ref{fig:Qubit-102-1-X+_X+-XX-XX-std-std}. 
Due to this beating, the system undergoes a rapid loss of fidelity far exceeding the simple dephasing rate. Eq.~\eqref{eq:fit2} also provides an efficient way of measuring the crosstalk.  By fitting the data into the equation with fitting parameters $\delta \omega$, $\Gamma_\omega$, $\Gamma_{1/f}$, $\delta\omega$ and $J$, we obtain the two decoherence rates, the residual physical detuning, and the crosstalk magnitude, $J$. A pair of highly detuned transmon qubits coupled by a fixed capacitance gives rise to state-dependent frequency shifts, sometimes referred to as the dispersive shift, $\chi$. The observed beating frequency, $J$, can be related to the dispersive shift and thus the bare qubit-qubit coupling, $g$, using Eq.~\eqref{eq:qubit-qubit-dispersive-shift}, where $f$ is the qubit frequency, $\eta$ is the qubit anharmonicity and the qubits are labeled by subscripts $0$ and $1$. We direct readers to \cite{didier_analytical_2018} for a full treatment. Note that $\chi$ is 2 times the measured beating frequency and is equivalent to $J$ in Eq.~\eqref{eq:fit} and \eqref{eq:Hamiltonian}.

\begin{equation}
\label{eq:qubit-qubit-dispersive-shift}
\chi = \frac{2g^2 (\eta_0 + \eta_1)}{(f_0 - f_1 + \eta_1)(f_0 - f_1 - \eta_1)} \;,
\end{equation}

The bare qubit-qubit coupling, $g$, is directly proportional to the capacitance between the two transmons, $g_C$, and is thus a hardware design parameter which must be carefully controlled. Engineering the strength of $g$ is critical as it determines the speed of two-qubit gates and precise measurements are important for select gate operating points, constructing a physical model of the device and providing feedback for quantum circuit designers.

Using the average of our synchronized DD Ramsey measurements, we extract a beating frequency $\chi = 35.6\pm 1.8$ kHz. This is consistent with the $ZZ$ coupling extracted by measuring the difference in physical detuning caused by preparing the neighboring qubits in $\ket{0}$ and $\ket{1}$ states ($39.1 \pm 0.3$~kHz). This corresponds to a value of $11.34\pm 0.28$~MHz for $g$, which is consistent with the designed capacitance and matches the value of $g$ extracted from the 2-qubit gate operating point as described in Appendix~\ref{sec:appendix:chi} ~\cite{didier_analytical_2018}. The slightly higher measurement uncertainty on the beating frequency is attributed to the larger number of fit parameters in the model, and reduced signal strength over the time frame. We thus note that synchronized decoupling sequences can serve as a viable method for probing the qubit-qubit coupling magnitude on hardware or providing a starting guess for 2-qubit gate bringup. 

Leveraging syncopated decoupling to measure ZZ, or other forms of crosstalk, can be beneficial for devices with large numbers of qubits and complex crosstalks. For example, we may consider the task of measuring ZZ coupling between all pairs of qubits. Using standard JAZZ protocol \cite{takita_experimental_2017, Garbow1982, Ku2020, sagastizabal_variational_2020}, it is possible to measure the ZZ between a pair of qubits using two Ramsey measurements with the neighbour qubit first in the Z+ state, and then in the Z- state. Each measurement yields a detuning and the difference of the detunings is proportional to the ZZ coupling. Using this approach, all the ZZ crosstalks can be measured using $N(N-1)$ Ramsey measurements. By applying syncopation, it is possible to select which crosstalks to decouple, and which to leave in place. This allows for crosstalks to be measured simultaneously, meaning a full crosstalk graph can be characterized with $2(N-1)$ Ramsey measurements as shown in Figure \ref{fig:zz-measurement-schemes}. By using the beating model, another factor of 2 can be gained, reducing the number of measurements fo $N-1$. Thus, leveraging syncopated dynamical decoupling offers a more scalable approach to characterization of static crosstalks.

\begin{figure}[h]
    \centering
    \begin{subfigure}{1.0\linewidth}
        \centering
        \includegraphics[width=\columnwidth]{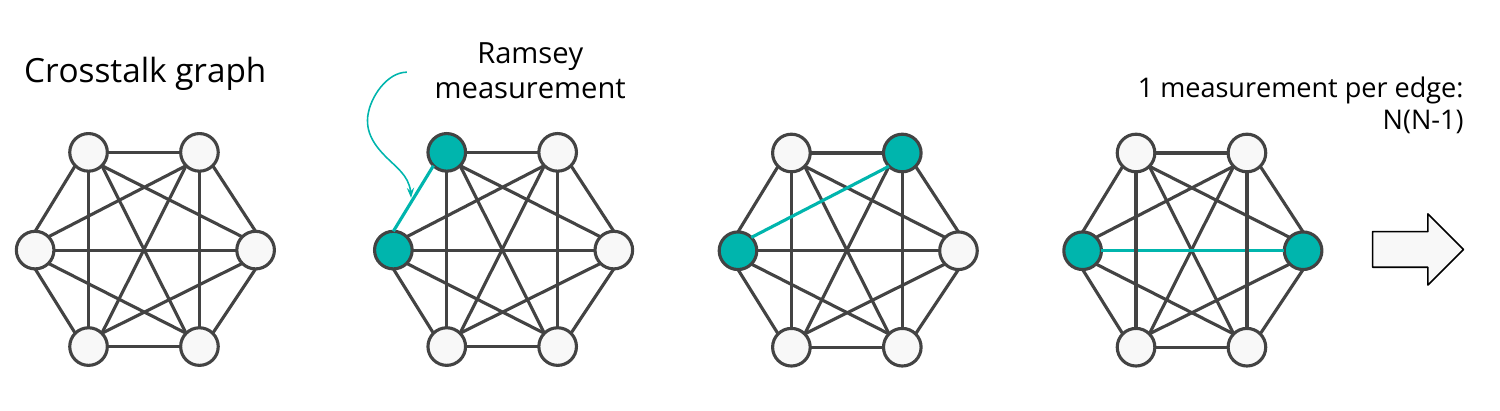}
        \label{fig:zz-one-by-one}
    \end{subfigure}
    \begin{subfigure}{1.0\linewidth}
        \centering
        \includegraphics[width=\columnwidth]{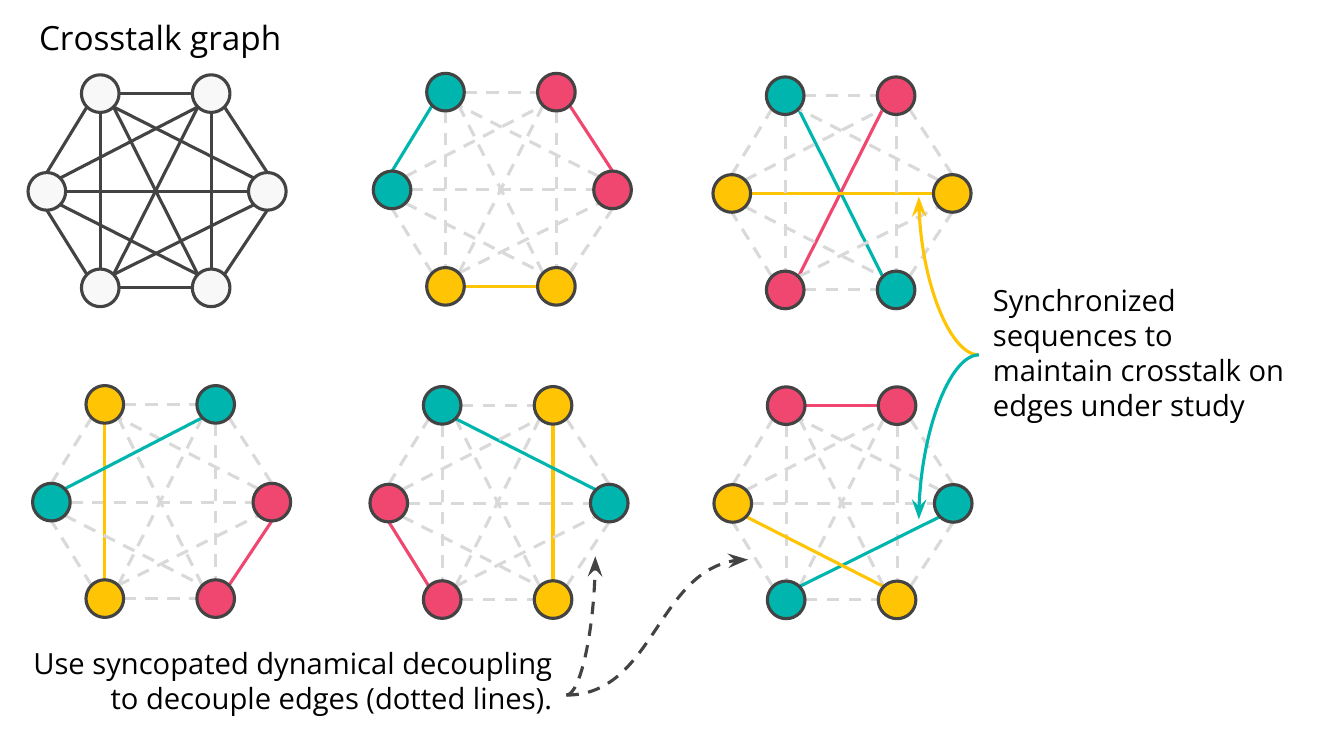}
        \label{fig:zz-syncoapted}
    \end{subfigure}
    \caption{A fully-connected crosstalk graph with 6 qubits is depicted, with crosstalk represented by solid lines. (Top) The JAZZ-based approach to measuring the crosstalk proceeds by changing the state of the qubits one-by-one, and measuring the detuning in each state. $N(N-1)$ measurements (2 for each edge) are required to characterize all crosstalks. (Bottom) Syncopated dynamical decoupling is used to decouple static crosstalks, allowing simultaneous measurement of crosstalks. The decoupled edges are depicted by dotted gray lines, while each pair under study is coloured. This requires $N-1$ measurements.}
    \label{fig:zz-measurement-schemes}
\end{figure}

A related problem is the measurement of qubit frequencies. In the case of a fully-connected ZZ crosstalk graph, the frequency of a qubit will be affected by the state of the entire qubit register, making it difficult to determine the isolated frequency of each. By applying syncopation to decouple the ZZ crosstalk between all qubits, a simultaneous Ramsey measurement can determine the isolated frequency of all qubits using a single Ramsey experiment. Applying syncopated dynamical decoupling in these contexts could yield significant improvements in the number of required experiments. Other forms of crosstalk could be similarly probed, the only modification is to select the appropriate syncopated sequences and measurement basis. Because the device under study suffers only relatively sparse nearest-neighbour ZZ crosstalk, we leave such investigation for future work.

It is a common practice to characterize the dephasing time under the assumption that the $T_2$ decay follows a single exponential. However, if the Ramsey experiments in our study were fit into a single exponential decay, the fitting curve would clearly deviate from the experimental data beyond certain time. Such a feature is consistent through all experiments. The data are fit much better by explicitly including a Gaussian decay factor in the model, to account for $1/f$ noise.  We observe that averaged over all of the experiments, the single exponential decay model fits our data with a reduced $\chi^2$ of 49.8, while the addition of Gaussian decay yields a reduced $\chi^2$ of 38.2, representing improved agreement with the data.  Note that since the reduced $\chi^2$ measure is `per each fitting parameter', the improvement shown is not merely due to extra degrees of freedom introduced in the second decay factor. 
This technique was also used in Ref.~\cite{McCourt_LearningNoise_2023} and better fitting were obtained for experiments therein.
We further noticed that in Table.~\ref{tab:syncopation-summary}, with the two-decay-factor noise model, the single-exponential decay is barely affected while the Gaussian decay is suppressed by DD.
This is consistent with the understanding that the single-exponential decay is often caused by white noise, which cannot be suppressed by echo-based techniques like DD.  In fact, as shown in column ``$\Gamma_w$'' in Table.~\ref{tab:syncopation-summary}, the exponential decay rate extracted from experimental data, bare or with various DD sequences, are highly consistent.  On the contrary, the Gaussian decay rate reflects the efficiency of various DD schemes on suppressing single-qubit decoherence.
Such numerics provide further evidence that this two-decay-factor model is a model that better captures the underlying noise in the hardware.  We thus note that our benchmark experiments also provide an efficient tool to probe and characterize single-qubit noise and reveal its underlying physics.  

\subsubsection{\label{subsec:suppressing_zz}Mitigating the effect of ZZ coupling}

While we have shown that dynamical decoupling can be used to isolate and measure the effect of ZZ coupling, during normal operation of the QPU it is desirable to eliminate this unintended effect. To this end, we introduce the syncopated decoupling scheme (Figs.~\ref{fig:two-qubit-XXXX-XX-cartoon} and ~\ref{fig:two-qubit-XX-XX-cartoon}).

To demonstrate the scheme, we repeat the experiment shown in Fig.~\ref{fig:Qubit-102-1-X+_X+-XX-XX-std-std}, but with syncopated, rather than synchronized, pulses. The results are shown in Fig.~\ref{fig:Qubit-102-1-X+_X+-XX-XX-std-cpmg}. In this case, we achieve syncopation by shifting the relative timing of the DD sequences (as depicted in Fig.~\ref{fig:two-qubit-XX-XX-cartoon}), but using the frequency doubling approach (Fig.~\ref{fig:two-qubit-XXXX-XX-cartoon}) yields similar results (see Appendix~\ref{sec:additional-data}). The suppression of the $ZZ$ crosstalk is evidenced by the elimination of the beating pattern, and the recovery of the expected Ramsey decay envelope. 
Experimental results with a set of different initial states is shown in Appendix~\ref{sec:additional-data}, where we observe that the physical detuning caused by the state of the neighbouring qubit is also eliminated.

\begin{figure}[t]
    \centering
    \begin{minipage}{\columnwidth}
    \includegraphics[width=\columnwidth]{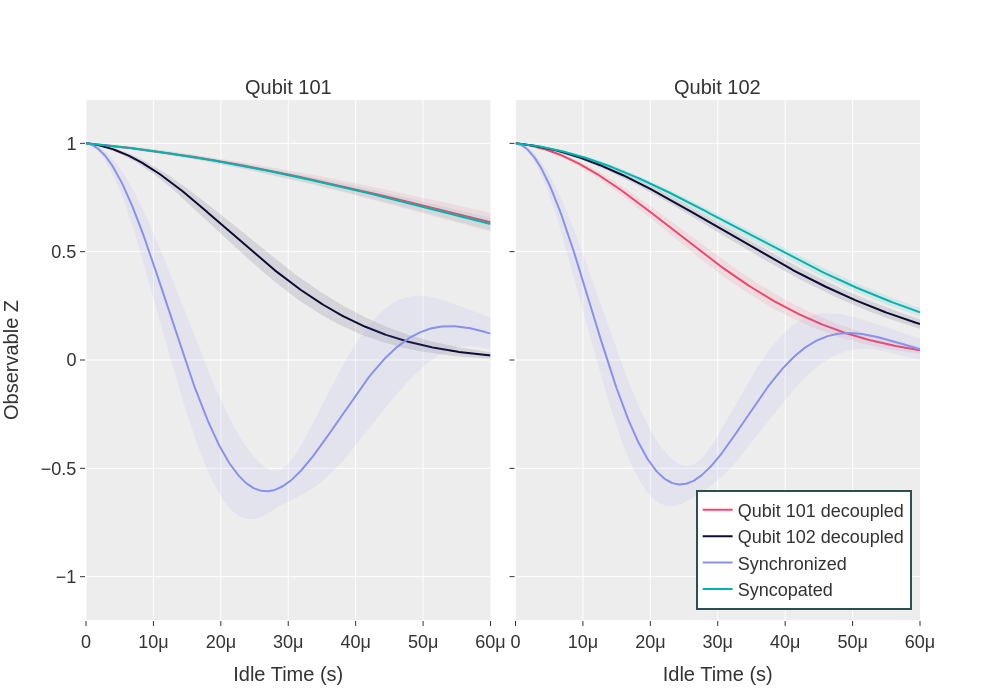}
    {\footnotesize(a)~~~~~~~~~~~~~~~~~~~~~~~~~~~~~~~~~~~(b)\par}
    \end{minipage}
    \caption{The average decay envelope of each qubit when different DD schemes are applied.  DD applied to Qubit 101 alone is shown in red, to Qubit 102 alone is shown in black.  Synchronized and syncopated DD are shown in blue and teal, respectively. The error bands reflect the standard error of the parameter estimate over the set of experiments. The syncopated DD provides the best protection to both qubits.}
    \label{fig:decay-envelopes}
\end{figure}

The experiment is repeated for wide range of DD sequence combinations and qubit states. The full results can be found in Appendix~\ref{sec:additional-data} and are summarized in Table~\ref{tab:syncopation-summary}. We observe that syncopated decoupling suppresses the physical detuning, eliminates the beating caused by the crosstalk, and suppresses the $1/f$ dephasing rate, enhancing the coherent time.

\begin{table*}
    \caption{The fit parameters for the experiments are summarized for syncopated and synchronized decoupling sequences where the neighbour is in state $\ket 0$ or $\ket 1$. We observe that with syncopation, the detuning is eliminated, and the 1/f decay rate is suppressed. The reported parameters are the average of all the experiments performed.}
    \label{tab:syncopation-summary}
    \begin{ruledtabular}
        \begin{tabular}{cccccc}
        \textbf{Qubit} & \textbf{Decoupling} & \textbf{Neighbour} & \textbf{Detuning (kHz)} & $\mathbf{\Gamma_w}$ \textbf{(kHz)} & $\mathbf{\Gamma_{1/f}}$ \textbf{(kHz)} \\
        \hline
        101 & Synchronized & \Zplus & -18.93+/-0.29 & 2.8+/-0.4 & 22.0+/-0.9 \\
        101 & Synchronized & \Zminus & 19.45+/-0.30 & 2.78+/-0.23 & 21.6+/-0.8 \\
        101 & Syncopated & \Zplus & -0.71+/-0.06 & 2.79+/-0.20 & 9.0+/-0.5 \\
        101 & Syncopated & \Zminus & 0.90+/-0.05 & 2.79+/-0.24 & 8.2+/-0.5 \\
        \hline\hline
        102 & Synchronized & \Zplus & -19.75+/-0.30 & 2.8+/-0.4 & 26.0+/-0.8 \\
        102 & Synchronized & \Zminus & 20.02+/-0.30 & 2.9+/-0.5 & 26.1+/-0.8 \\
        102 & Syncopated & \Zplus & -0.103+/-0.035 & 2.79+/-0.21 & 19.2+/-0.5 \\
        102 & Syncopated & \Zminus & 0.403+/-0.033 & 2.79+/-0.23 & 18.8+/-0.5 \\
        \end{tabular}
    \end{ruledtabular}
\end{table*}

The efficacy of the syncopated DD is highlighted by examining the decay envelopes measured for the different decoupling schemes for both qubits, as in Fig.~\ref{fig:decay-envelopes}. While decoupling one qubit only suppresses its dephasing rate and the crosstalk, the neighbouring qubit is still subjected to dephasing. When the same decoupling sequence is applied to both qubits in a synchronized manner, the beating of the decay envelope from the crosstalk indicates the presence of a large coherent error. By syncopating the decoupling sequences, decoherence on both qubits is simultaneously suppressed along with the crosstalk.

\subsection{Application of Syncopated DD to algorithmic circuits}
\label{subsec:application}

Applying dynamical decoupling to improve the performance of application circuits can be non-trivial. A typical approach is to inspect the compiled circuit, locate idling periods of qubits in the circuit and insert the DD sequence of choice~\cite{Niu2022, Qiskit}. However, as demonstrated in Sec.~\ref{sec:benchmarking}, synchronized DD applied to a pair of idle qubits can lead to a perverse outcome where a static coupling remains. To demonstrate the efficacy of our syncopated DD, we constructed a QAOA circuit~\cite{Hadfield2019} that aims to solve the MAXCUT problem on a 4-qubit square as an example. 

To minimize the detrimental effect of two-qubit gate infidelity, we implement a level-one QAOA algorithm, with the optimal angles determined theoretically \cite{Wang_QAOA_Ring}. The algorithm is then composed of state preparation, a layer of phase-separators followed by a layer of single-qubit mixing operators, and a final measurement. We focus on the subroutine of the phase-separation layer, which can be further decomposed into a two qubit native gate on each edge. On the Rigetti Aspen architecture, neighbouring 2-qubit gates are typically not executed simultaneously to minimize operational crosstalk. Rather, they have at least nearest-neighbour separation to ensure high-performance. We note that such idle times are not always readily available in application circuits, however. Tunable-coupler architectures allow for improved operation of neighbouring 2Q gates and thus denser circuits meaning that idle qubits may become less common. The presence of such times is ultimately a feature of the native topology, the algorithm and the operating mode which must be carefully balanced to achieve good performance. Exploring methods of decoupling static crosstalks in circuits with few gaps may be a promising avenue for future work.

\begin{figure}[h]
    \centering
    \includegraphics[width=0.9\columnwidth] {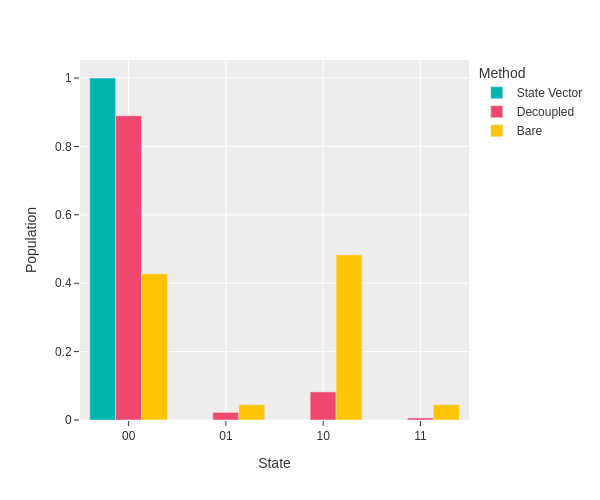}
    \caption{The distribution of bitstrings for spectator qubits during a single cycle of the QAOA circuit.}
    \label{fig:bitstring-distributions}
\end{figure}

In the case of our square lattice, the operating mode naturally leads to a pair of idle qubits in the circuit while the phase-separator is applied to the other two. The phase separator is implemented using the native CPHASE gate of duration $~200$ns, 
The RX($\pi$) gates that compose the $X$ pulses in DD are calibrated to a duration of $40$ns, which means we can fit up to 4 decoupling pulses within the idle time. As entangling gate times are reduced, we expect that the importance of using sequences with minimal number of pulses will grow.

To verify that the decoupling sequences are protecting the idle qubits from errors during neighbouring 2-qubit gate operation, we examine the distribution of bitstrings from a single cycle. While two of the four qubits undergo a CPHASE gate, we observe the output states of the two spectator qubits, which are prepared in the \Xplus state and return back to \Zplus at the end of the circuit. Without any errors, we expect to find both spectators in the $|00\rangle$ states with certainty. However, we find that the spectators are significantly affected by their ZZ crosstalk, and by the operation of the neighbouring CPHASE gate, resulting in significant errors in the distribution. We introduce syncopated decoupling onto these spectators, and observe that the errors are greatly reduced - the bitstring distribution is far closer to the expected one. The distributions are shown in Fig.~\ref{fig:bitstring-distributions}.

\begin{figure}[h]
    \centering
    \begin{subfigure}{1.0\linewidth}
        \centering
        \includegraphics[width=0.9\columnwidth]{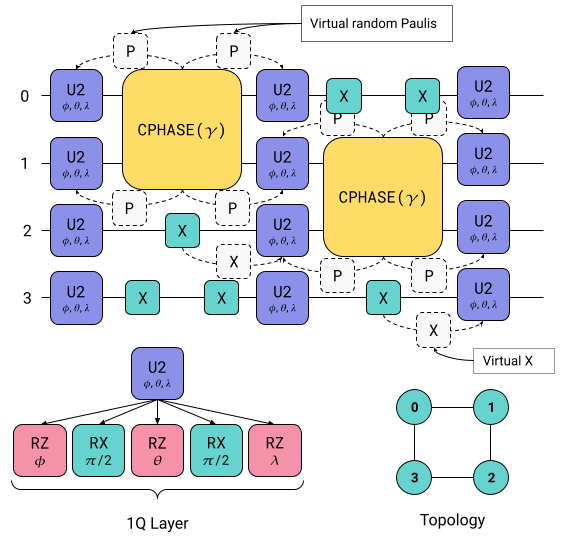}
        \caption{}
        \label{fig:compilation-scheme}
    \end{subfigure}
    \begin{subfigure}{1.0\linewidth}
        \centering
        \includegraphics[width=\columnwidth] {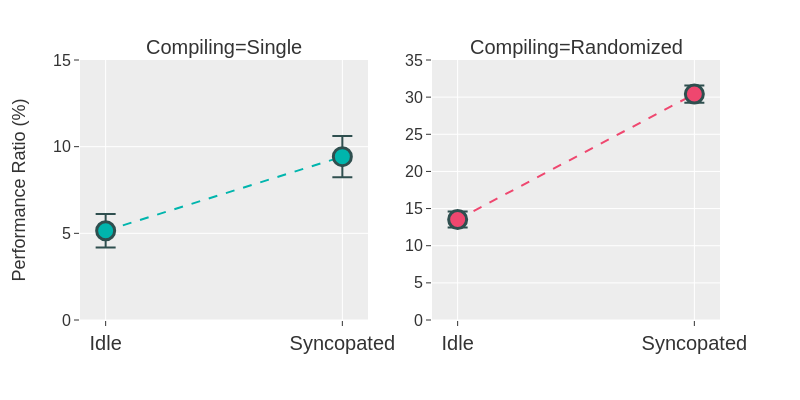}
        \caption{~~~~~~~~~~~~~~~~~~~~~~~~~~~~~~~(c)}
        \label{fig:performance-ratio}
    \end{subfigure}
    \caption{(a) The compilation scheme is shown for a two layers of entangling gates. (b) The performance ratio is shown for the circuit with the qubits left idle and with syncopated dynamical decoupling applied. (c) Syncopated dynamical decoupling is combined with randomized compiling. Applying syncopated DD improves performance with or without randomized compiling, and using the two error mitigation methods in conjunction provides the most significant improvement.}
    \label{fig:qaoa-circuit}
\end{figure}

To further mitigate the effects of coherent errors and state-dependent errors, we additionally introduce the use of random compilation \cite{wallman_noise_2016}. Randomized compiling prevents coherent errors from accumulating, which can improve circuit fidelity. In addition, it ensures that the fidelity improvement is robust under many logically equivalent instances of the circuit. In order to combine syncopated DD with random compilation, the circuit is first structured as layers of 2Q and 1Q gates. Within the cycle, we identify the idle qubits and the static crosstalks between them. In this case, it is clear that the two idle qubits have a ZZ crosstalk, but the crosstalk graph could be more complex for other platforms. The syncopated sequences are inserted on the idle qubits and finally, decoupling pulses which are at the very start or very end of the idle period are absorbed into the 1Q rotation and twirling layer. Our compilation scheme is illustrated in Fig.~\ref{fig:compilation-scheme}.

The results of the experiment are shown in Fig.~\ref{fig:qaoa-circuit}. A common figure of merit in QAOA is the approximation ratio, which the fraction of the true best cost found by the algorithm. However, in this experiment we wish to probe the effect of noise on the QPU, so we define the ``performance ratio'' as the fraction of the noisy divided by the noiseless approximation ratio, which is directly related to the circuit fidelity. Thus, if the QPU were error-free we would report a performance ratio of 1, while a QPU subject to strong depolarizing noise would have a performance ratio of 0. Implementing the bare circuit (i.e. with no error mitigation) results in a low performance ratio, due to readout errors (3\% - 8\%), 1Q and 2Q gate errors (4\% and 10\%), decoherence and crosstalk. Our syncopated dynamical decoupling scheme mitigates the effects of crosstalk errors and decoherence on the idle qubits, and significantly boosts the performance, as shown in the bottom left panel of Fig.~\ref{fig:qaoa-circuit}.  Applying syncopated DD on top of random compilation further doubled the performance ratio, as shown in the bottom right panel of Fig.~\ref{fig:qaoa-circuit}. These two techniques hence not only are compatible with each other, but simultaneously contribute to improving the performance ratio of the circuit.

\section{Concluding Remarks}
Via a discrete search in super operator representation, we identified a family of dynamical decoupling sequences for crosstalk between qubits.
In particular, for ZZ crosstalk that prevails in superconducting transmon-qubit-based architectures, we identified a syncopated decoupling family that can suppress crosstalk between arbitrary pairs of qubits.  We investigated two primitive DD techniques in this family, namely the frequency-doubling DD as well as timing-shifted DD.  In the case of static coupling as in our experiments, they are equivalent in efficiency.  We note that they provide extra flexibility in two scenarios.  One is when the two qubits in frequency-doubling DD are subject to noise of different correlation time, then one can apply the frequency-doubling and assign the sequences with denser pulses to the qubits subject to faster-varying noises.  The other advantage comes in when the idle time in a circuit is constrained (artificially prolonging idle time often comes at the price of introducing more error or lower fidelity), then the mix-and-match of these two DD strategies would provide greater flexibility.

In our two-qubit benchmarking experiments, an (without loss of generality) initial state $\ket{++}$ evolved under nothing but the crosstalk clearly manifested a beating feature, a signature of the presence of the ZZ crosstalk.  We demonstrated how synchronized DD, while not suppressing crosstalk, singles it out from single-qubit noise, and provides a method of precisely measuring the magnitude of the ZZ coupling, which is of great value in calibration.

Through different fitting models, the DD results reveals the two-tone nature of the underlying single-qubit noise.  An exponential decay rate highly consistent through all experiments, with or without DD pins down the white noise that is not responsive to DD and yields more focused observations of the efficiency of different DD schemes in mitigating the 1/f noise.

For a QAOA circuit on a square graph, we demonstrated that syncopated DD greatly improves the performance of the circuit.  DD applied in conjunction with random compilation gives it a further boost.  
In future work it is worth exploring the deeper connection between these two techniques during their interplay.

\begin{acknowledgments}

The work described in Sec.~\ref{sec:benchmarking} was supported by 
the U.S. Department of Energy, Office of Science, National Quantum Information Science Research Centers, Superconducting Quantum Materials and Systems Center (SQMS) under the contract No. DE-AC02-07CH11359 through NASA-DOE SAA 403602.  
The work in Sec.~\ref{subsec:application} was supported by 
the Defense Advanced Research Projects Agency (DARPA) under Agreement No. HR00112090058. 
We are grateful for support from NASA Ames Research Center. Z.G.I. and Z.W. are supported by NASA Academic Mission Services (NAMS), contract number NNA16BD14C.

\end{acknowledgments}

\bibliography{references}
\appendix

\section{\label{sec:detail-on-syncopation}Detail on syncopation and related schemes}

Consider an individual qubit experiencing dephasing (represented by a static $Z$ error). If we apply an even number of $\pi$ pulses about an axis in the $x$-$y$ plane periodically, at the end of the DD sequence all the $Z$ errors can be shown to cancel out. Furthermore, if we consider a neighboring qubit (like in Fig.~\ref{fig:dd_sequences}a), which might be causing $ZZ$ coupling with the original qubit, this $ZZ$ coupling is also averaged out by the XXXX sequence applied to the first qubit. This was experimentally demonstrated in~\cite{Pokharel2018}. Such techniques for suppressing $ZZ$ date back to early NMR literature. For example, in Ref.~\cite{Garbow1982} a bilinear rotation decoupling (BIRD) sequence was proposed where, through pulse manipulation of a spin of a particular variety, an effective $\pi$-pulse could be induced on the subset of spins directly coupled to it, and decoupled the $ZZ$ between the subset and the rest of spins. Recent work in the qubit context implemented a similar protocol where the $\pi$ pulses were achieved with single-qubit controls and demonstrated its effectiveness experimentally.~\cite{Tripathi2022}

However, a drawback of this design is that the neighboring qubit is still subject to single-qubit decoherence.  If the same DD sequence is applied to both qubits in a synchronized manner to mitigate individual decoherence, the $ZZ$ crosstalk will not be averaged out, as shown in Fig.~\ref{fig:two-qubit-XXXX-XXXX-cartoon}.  In a circuit where all qubits are data qubits, it would be more desirable to have a DD sequence that both suppresses decoherence and cancels out the crosstalk between all pairs of qubits.

One way to accomplish this is via ``even multiplier frequency'' DD sequences, in which a periodic DD pulse sequence is applied to each qubit of a pair, but the number of pulses applied to one is an even multiplier of the number applied to the other.  This is illustrated in Fig.~\ref{fig:two-qubit-XXXX-XX-cartoon}, where we show the XXXX, XX sequence; while 4 pulses are applied to the first qubit, only 2 are applied to the second, at half the frequency. 

\section{\label{sec:additional-data}More results on benchmarking: neighboring qubit in \Zplus and \Zminus states}
In Sec.~\ref{sec:benchmarking}, we presented results for two qubit initialized in the $\ket {++}$ state, where the $ZZ$ crosstalk manifested as a beating in the Ramsey oscillation. 
Here we show that our protocol is equally valid for initial states $\ket{+0}$ and $\ket{+1}$, where the physical detuning of a qubit is modulated by the state of its neighbour.
Again, in order to isolate the pair of interest from others, we apply synchronized decoupling pulses to both qubits. The ZZ coupling between them results in the qubit frequency being dependent on the state of the neighbouring qubit, and we thus observe a detuning of $\pm ~20$kHz conditioned on the state of the neighbour, shown in Fig.~\ref{fig:ramsey-synchronized-vs-syncopated-two-pulse}.

\begin{figure*}
    \centering
    \begin{subfigure}{0.33\linewidth}
        \centering
        \includegraphics[width=\textwidth]{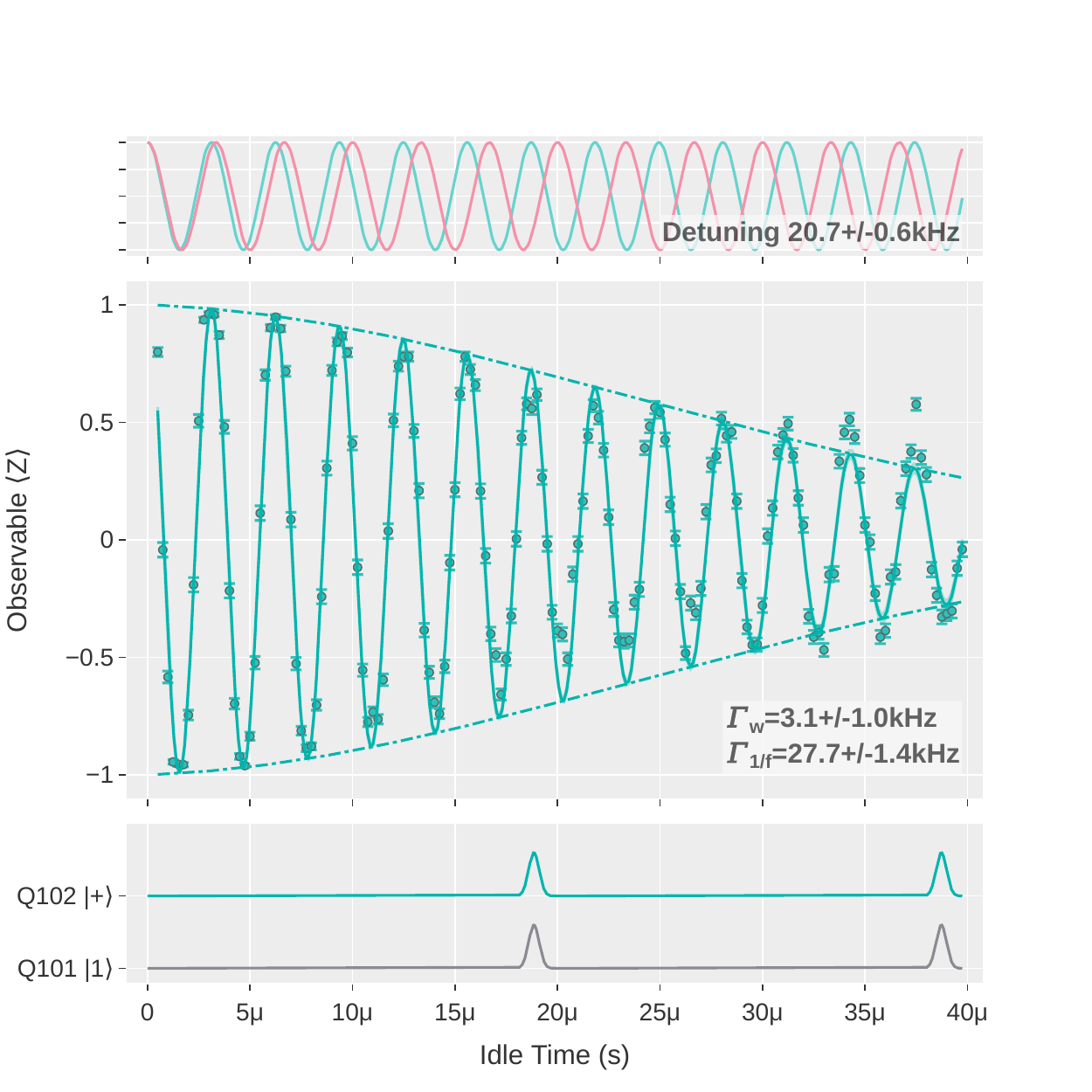}
        \caption{}
        \label{fig:Qubit-102-1-Z-_X+-XX-XX-std-std}
    \end{subfigure}%
    \begin{subfigure}{0.33\linewidth}
        \centering
        \includegraphics[width=\textwidth]{Qubit-102-1-X+_X+-XX-XX-std-std.pdf}
        \caption{}
        \label{fig:Qubit-102-1-X+_X+-XX-XX-std-std}
    \end{subfigure}%
    \begin{subfigure}{0.33\linewidth}
        \centering
        \includegraphics[width=\textwidth]{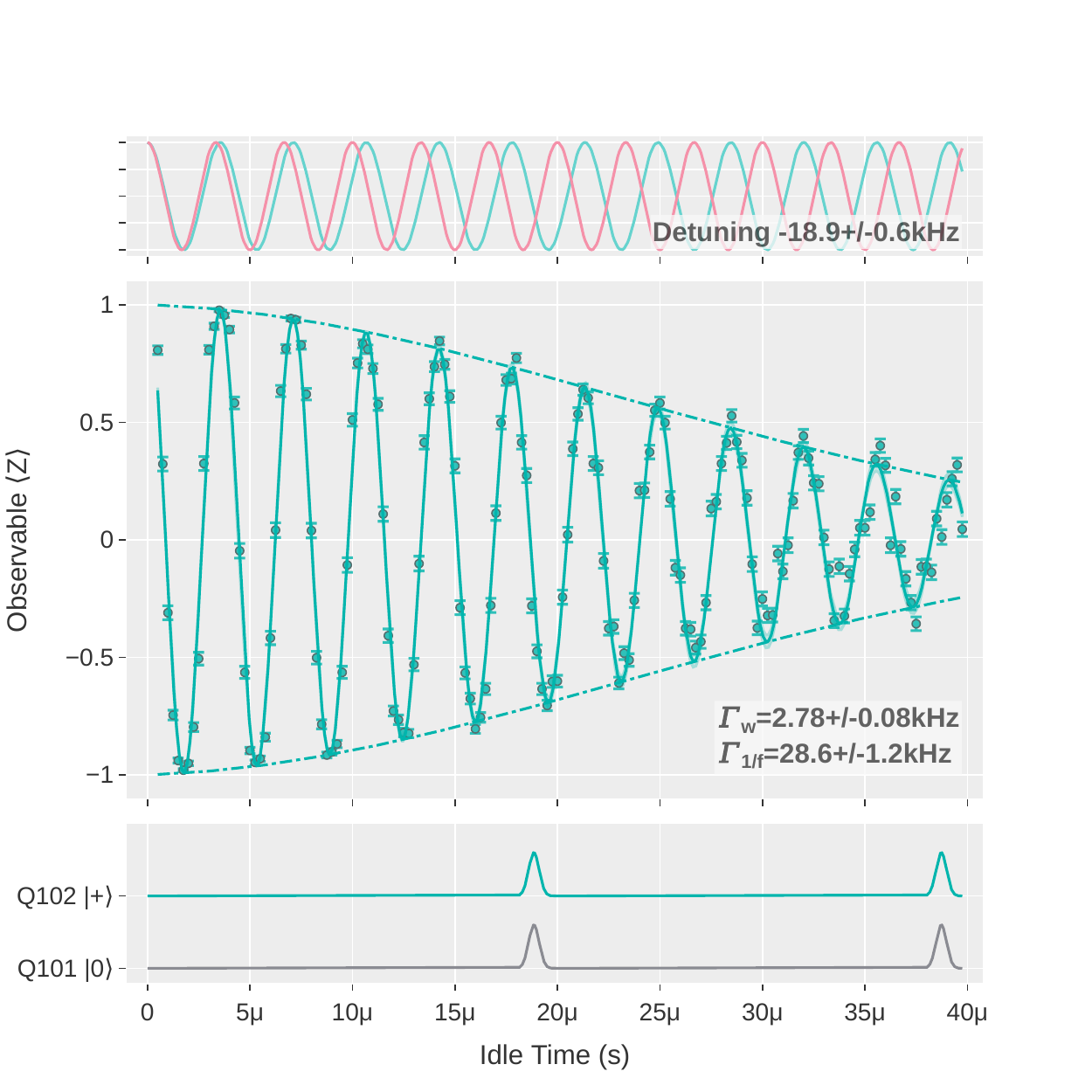}
        \caption{}
        \label{fig:Qubit-102-1-Z+_X+-XX-XX-std-std}
    \end{subfigure}
    \begin{subfigure}{0.33\linewidth}
        \centering
        \includegraphics[width=\textwidth]{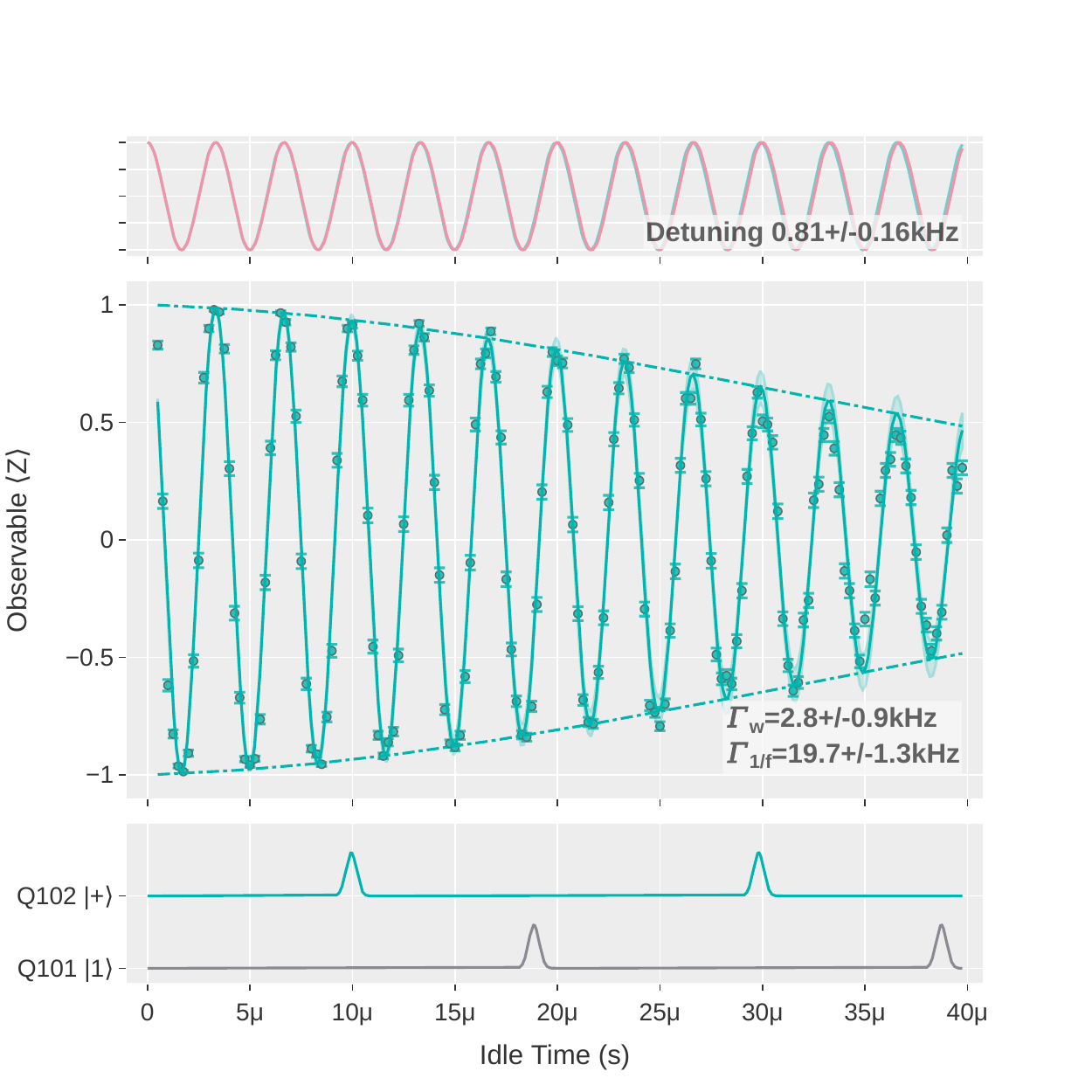}
        \caption{}
        \label{fig:Qubit-102-1-Z-_X+-XX-XX-std-cpmg}
    \end{subfigure}%
    \begin{subfigure}{0.33\linewidth}
        \centering
        \includegraphics[width=\textwidth]{Qubit-102-1-X+_X+-XX-XX-std-cpmg.pdf}
        \caption{}
        \label{fig:Qubit-102-1-X+_X+-XX-XX-std-cpmg}
    \end{subfigure}%
    \begin{subfigure}{0.33\linewidth}
        \centering
        \includegraphics[width=\textwidth]{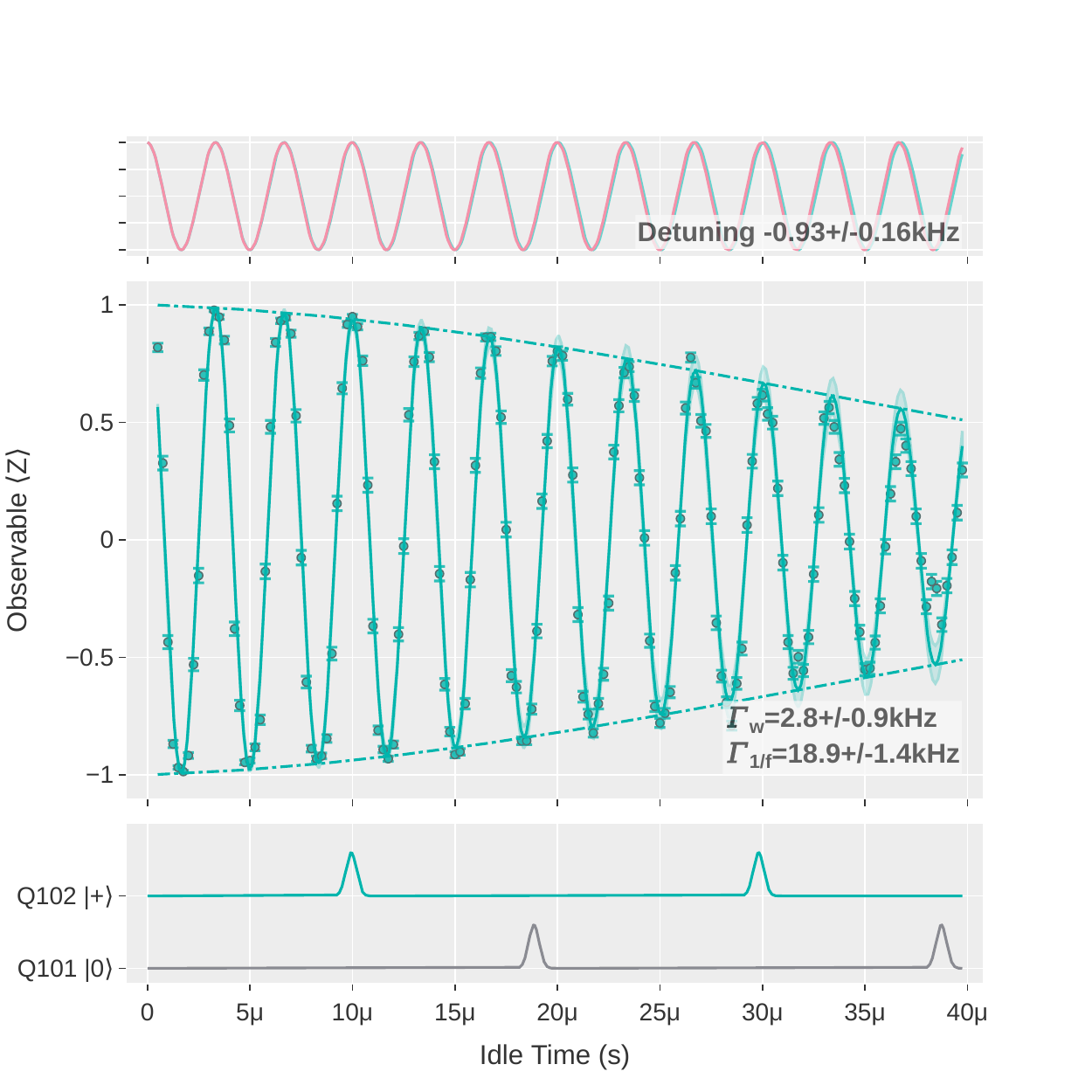}
        \caption{}
        \label{fig:Qubit-102-1-Z+_X+-XX-XX-std-cpmg}
    \end{subfigure}
    \begin{subfigure}{0.33\linewidth}
        \centering
        \includegraphics[width=\textwidth]{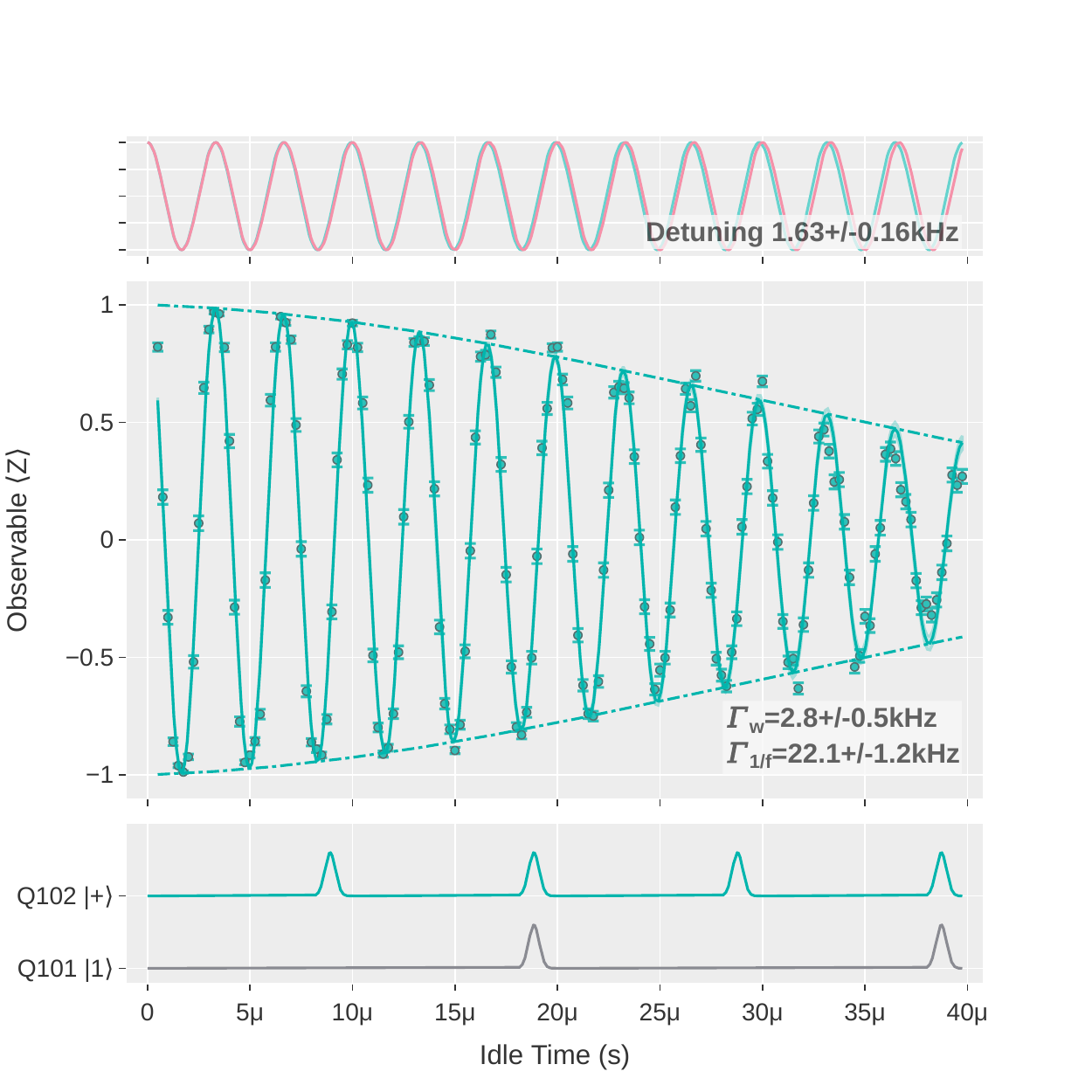}
        \caption{}
        \label{fig:Qubit-102-1-Z-_X+-XX-XXXX-std-std}
    \end{subfigure}%
    \begin{subfigure}{0.33\linewidth}
        \centering
        \includegraphics[width=\textwidth]{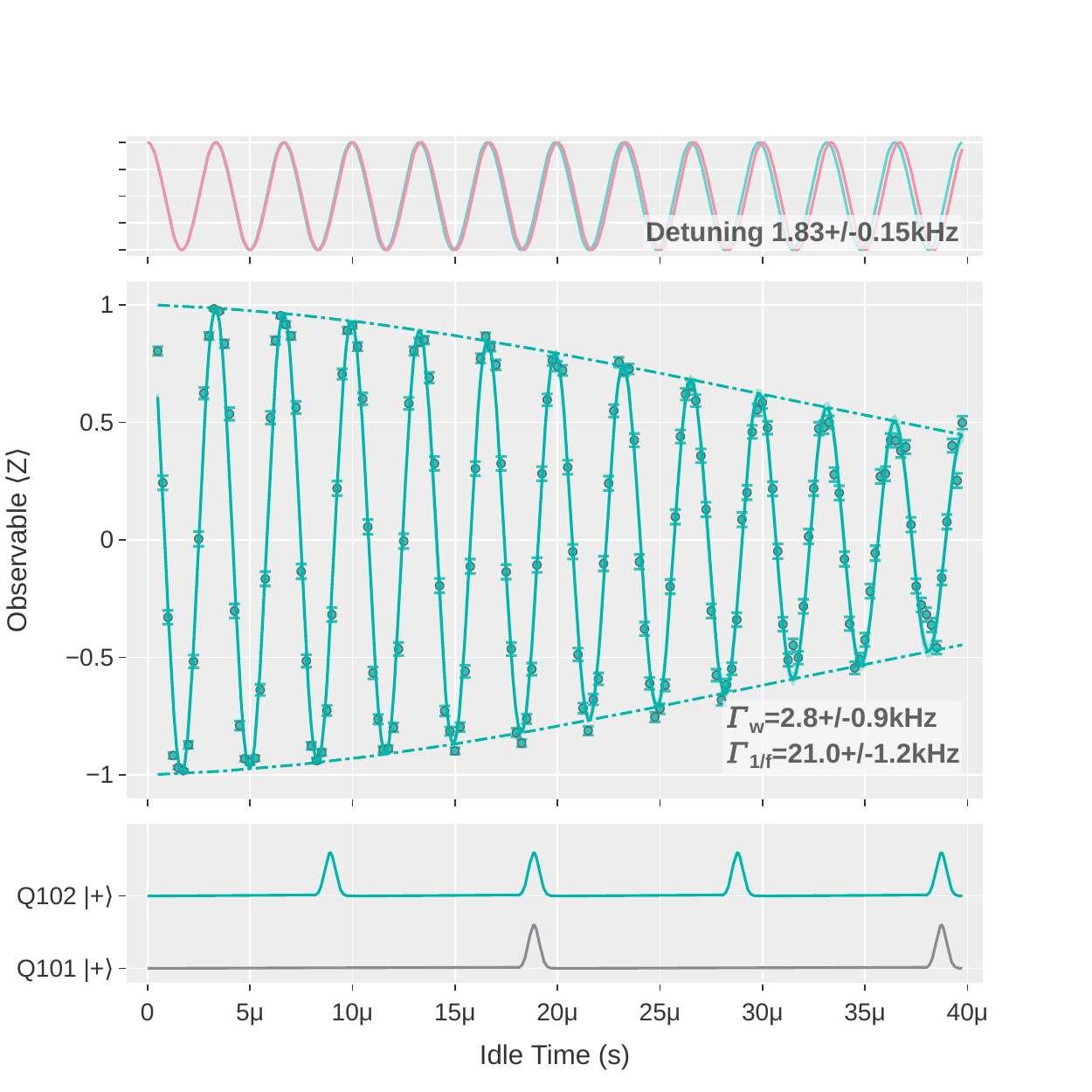}
        \caption{}
        \label{fig:Qubit-102-1-X+_X+-XX-XXXX-std-std}
    \end{subfigure}%
    \begin{subfigure}{0.33\linewidth}
        \centering
        \includegraphics[width=\textwidth]{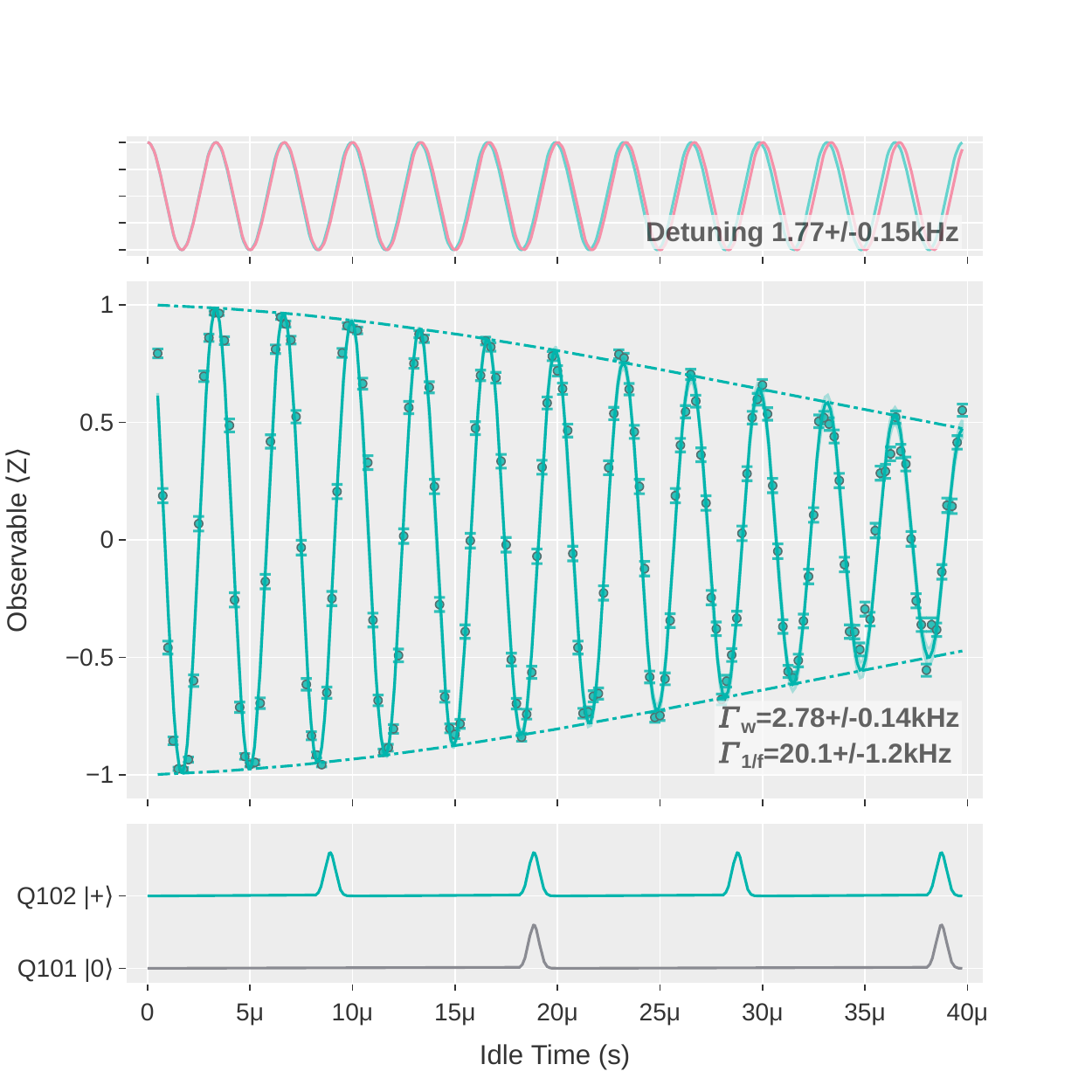}
        \caption{}
        \label{fig:Qubit-102-1-Z+_X+-XX-XXXX-std-std}
    \end{subfigure}
    \caption{(Top row) When neighbouring qubit 101 is in the (a) \Zminus state, qubit 102 experiences a physical detuning of +20.7kHz. When the neighbour is in the (b) \Yplus state, the detuning is negligible but a characteristic beating frequency of 18.1Khz is visible. When the neighbour is in the (c) \Zplus state, qubit 102 experiences a physical detuning of -18.9kHz. The detunings and beating frequency correspond to a $g$ of 11.42 MHz. (Middle row) With the time-shifted syncopated DD sequences, the detuning is eliminated in all cases, and the characteristic beating is suppressed. The decay envelopes are also more gradual, indicating an improved protection from decoherence. (Bottom row) The frequency-doubling syncopated sequences show similar results.}
    \label{fig:all-ramsey}
\end{figure*}

\section{\label{sec:syncopation-on-a-graph}Syncopation on a crosstalk graph}

Qubits are typically laid out in a lattice topology with nodes representing the physical qubits and edges representing interactions between them.
Interactions make entangling gates possible, but are frequently the source of static crosstalk, as is the case in the Aspen architecture, but crosstalk is not limited to the physical topology and may also occur through other mechanisms. Regardless, given a proposed set of crosstalk relationships forming a graph, we can ask how to select a set of decoupling sequences such that the crosstalk is maximally suppressed. This amounts to solving the graph-coloring problem on the crosstalk graph, where each color is a decoupling sequence which syncopates with every other sequence (color). The number of required sequences is the chromatic number of the graph.

\begin{figure*}
    \centering
    \begin{subfigure}{0.33\linewidth}
        \centering
        \includegraphics[width=\textwidth] {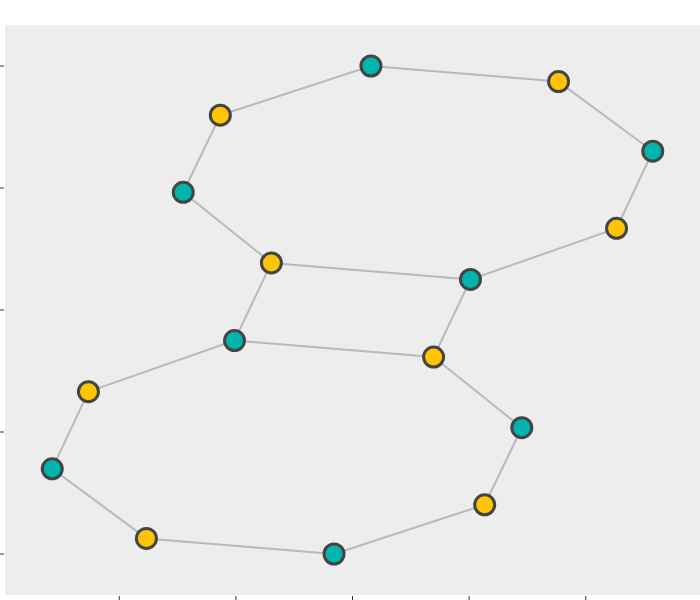}
        \caption{}
        \label{fig:aspen-lattice}
    \end{subfigure}%
    \begin{subfigure}{0.33\linewidth}
        \centering
        \includegraphics[width=\textwidth] {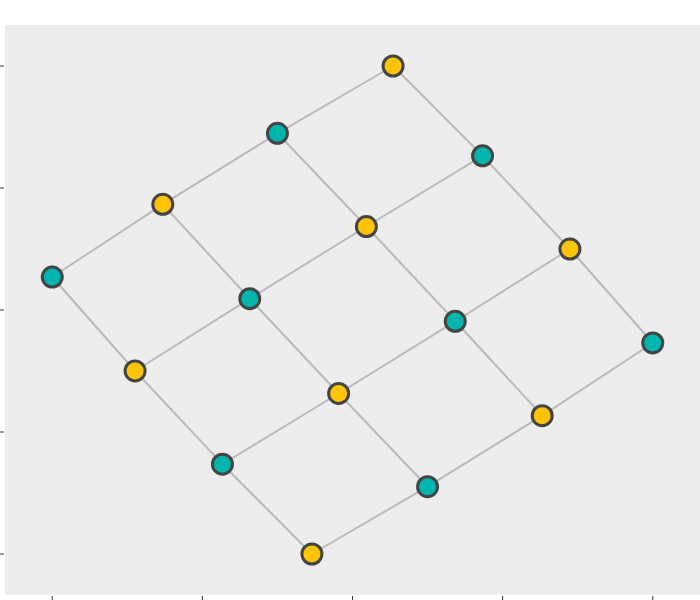}
        \caption{}
        \label{fig:ankaa-lattice}
    \end{subfigure}%
    \begin{subfigure}{0.33\linewidth}
        \centering
        \includegraphics[width=\textwidth] {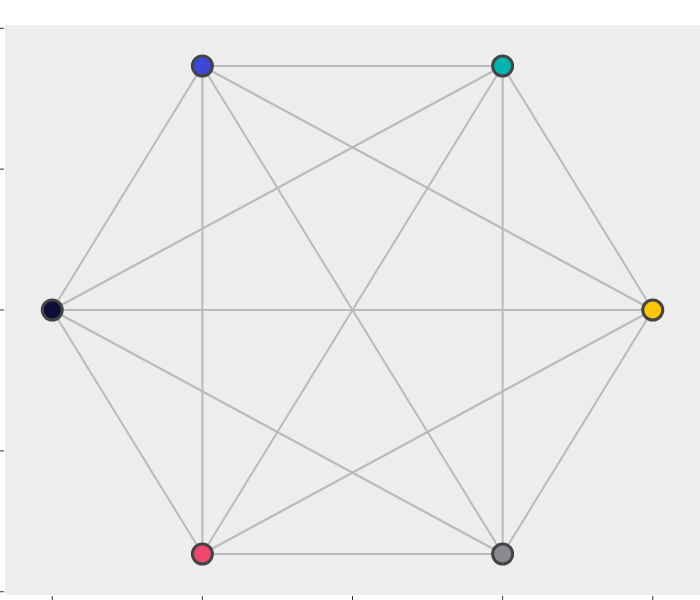}
        \caption{}
        \label{fig:complete-lattice}
    \end{subfigure}%
    \caption{The coloring of the (a) Aspen lattice, (b) square lattice, (c) fully-connected lattice. For the Aspen and square topologies, two syncopated sequences are sufficient to decoupling all the qubits. For the fully-connected model, however, each qubit requires a sequence.}
    \label{fig:ramsey-synchronized-vs-syncopated-two-pulse}
\end{figure*}

While the source and approximate magnitude of the fixed static coupling is known on the Aspen architecture, it could be the case that in other architectures or more advanced designs we have no prior knowledge of static couplings. Thus, we propose that expanding on Section ~\ref{sec:benchmarking}, carefully selected patterns of syncopation could be used to identify unknown crosstalks. Using Table ~\ref{tab:syncopation-matrix}, it is possible to select sequences which eliminate some crosstalks while preserving others, allowing a crosstalk model to be iteratively constructed from the measurement of many decoupling patterns. Hamiltonian engineering of this nature was further explored in refs \cite{tsunoda_efficient_2020} and \cite{bhole_rescaling_2020}.

\begin{theorem}
    Finding the upper bound of minimal dynamical decoupling sequence patterns for qubits with interaction represented by an arbitrary graph is NP-Complete.
\end{theorem}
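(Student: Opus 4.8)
The plan is to prove both directions of NP-completeness by leaning on the equivalence, established in the discussion above, between assigning syncopating decoupling patterns to qubits and properly coloring the crosstalk graph. First I would pin down the decision version of the problem: given a crosstalk graph $G=(V,E)$ together with an integer $k$, decide whether the qubits can be assigned decoupling sequences, drawn from a family of pairwise-syncopating patterns, using at most $k$ distinct patterns, so that the coupling on every edge is averaged out. Computing the minimal number of patterns is then the optimization counterpart whose hardness follows from that of this decision version, and the chromatic number furnishes the advertised upper bound.

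For the easy containment, I would exhibit the assignment $c\colon V\to\{1,\dots,k\}$ itself as an NP certificate. Verification reduces to checking, for each edge $(u,v)\in E$, that the patterns assigned to $u$ and $v$ syncopate, i.e.\ that the corresponding entry of the syncopation matrix of Table~\ref{tab:syncopation-matrix} lists the relevant coupling; each check is an $O(1)$ lookup, so the whole certificate is validated in time $O(|E|)$ and the problem lies in NP.

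For NP-hardness I would reduce from graph $k$-colorability, NP-complete for $k\ge 3$, via the identity map on graphs: the colorability instance $G$ becomes the crosstalk graph, and the $k$ colors become $k$ distinct patterns. The reduction is faithful provided the palette is chosen so that any two \emph{distinct} patterns syncopate, which collapses the edge constraint ``$u,v$ carry syncopating patterns'' onto the proper-coloring constraint ``$u,v$ carry different colors.'' The enabling lemma I would establish first is that arbitrarily large mutually-syncopating families exist: taking a fixed crosstalk type such as $ZZ$ and the frequency-doubling patterns of Sec.~\ref{sec:superoperator} with pulse counts $2^1,2^2,\dots,2^k$, any two members sit in an even-multiplier relation and hence syncopate, decoupling $ZZ$. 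With such a palette fixed, proper $k$-colorings of $G$ correspond bijectively to valid $k$-pattern decouplings, so $G$ is $k$-colorable iff the constructed instance admits a $k$-pattern solution.

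The step I expect to be the main obstacle is exactly this enabling lemma rather than the combinatorics of the reduction: the argument is sound only if the syncopation relation, restricted to the chosen palette, is complete, so that suppressing every crosstalk edge is literally the same as an ordinary proper coloring and no spurious residual interaction survives. Once the existence of an unbounded mutually-syncopating family is secured and the edge-local verifiability is noted, NP-hardness is inherited from $k$-colorability through a linear-size reduction, and combined with the NP membership above this yields NP-completeness. A minor point to handle carefully is that, since the $ZZ$ instances already realize the full hardness, the result extends to the general ``arbitrary interaction'' setting because those instances are a special case of it.
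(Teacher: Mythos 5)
Your proposal is correct and follows the same underlying route as the paper's proof---identifying the pattern-assignment problem with graph coloring and importing the NP-completeness of the chromatic number---but your version is a genuinely more complete argument. The paper's proof is a sketch: it declares that ``different colors on two connected vertices can cancel the crosstalk'' and concludes immediately via $\chi(G)$, silently assuming both (i) that any two \emph{distinct} patterns syncopate and (ii) that the problem lies in NP. Assumption (i) is false for arbitrary pattern sets: by Table~\ref{tab:syncopation-matrix}, XX and YY are distinct yet do not decouple a ZZ coupling. Your enabling lemma repairs exactly this hole, by restricting the palette to frequency-doubled sequences with $2^1,\dots,2^k$ pulses, so that every pair sits in an even-multiplier relation and hence syncopates, making proper $k$-colorings and valid $k$-pattern assignments coincide; incidentally, this construction is what the paper's subsequent corollary involving $2^{\chi(G)}$ is gesturing at. You also supply the NP-membership half, which the paper omits entirely. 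One point you should state explicitly rather than leave implicit: the backward direction of your claimed bijection requires that a pattern never syncopates with \emph{itself}---identical synchronized sequences leave the ZZ term intact, as the paper's benchmarking experiments demonstrate---since otherwise a valid assignment could reuse one pattern across an edge and the minimal number of patterns could fall below $\chi(G)$. With that one line added, your reduction is airtight, and arguably more rigorous than the proof given in the paper.
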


\begin{proof}
    Let a graph $G(V,E)$ represent the crosstalk between qubits, where each vertex $v\in V$ is a qubit, and an edge $e\in E$ represents crosstalk between them. And label each dynamical decoupling pattern on a single qubit with a different color. Different colors on two connected vertices can cancel the crosstalk. Then, the minimal number of colors needed to color the vertices without any two connected nodes having the same color is the minimal number of dynamical decoupling sequence patterns to cancel all the crosstalks. This number is also called chromatic number $\chi(G)$ of a graph. Since finding the chromatic number of a graph is NP-Complete \cite{DiscreteMath}, that concludes the proof.
\end{proof}

\begin{corollary}
    The minimal number of distinct syncopated dynamical sequences needed to cancel all the $ZZ$-type crosstalk is $2^{\chi(G)}$, where $G$ is the graph representation of the crosstalk, and $\chi(G)$ is the chromatic number of the graph $G$.
\end{corollary}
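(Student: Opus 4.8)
The plan is to reduce the $ZZ$-decoupling requirement to an orthogonality statement about sign patterns in the toggling frame, then lift the graph coloring of the preceding theorem to the $ZZ$-specific syncopation family and count the resulting distinct sequences. First I would fix the toggling-frame picture from Sec.~\ref{sec:superoperator}: a $\pi_x$ or $\pi_y$ pulse conjugates $Z \mapsto -Z$, so on each qubit $v$ a sequence induces a piecewise-constant sign function $s_v(t) \in \{+1,-1\}$, and the first-order average of the $Z_uZ_v$ term on an edge $(u,v)\in E$ is proportional to $\int_0^T s_u(t)\,s_v(t)\,dt$. Decoupling the edge means this integral vanishes (the two patterns are orthogonal), while protecting each qubit from its own $Z$ dephasing means $\int_0^T s_v(t)\,dt = 0$ (each pattern is balanced, i.e., orthogonal to the constant pattern). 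Thus cancelling all $ZZ$ crosstalk is equivalent to assigning to every vertex a balanced $\pm1$ sign pattern so that adjacent vertices receive orthogonal patterns.

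Next I would invoke the preceding theorem: requiring distinct colors to carry pairwise-orthogonal patterns turns the task into a proper coloring, so $\chi(G)$ mutually orthogonal balanced patterns are needed, one per color class. The $ZZ$-specific step is to realize these $\chi(G)$ mutually-syncopating patterns using only the admissible primitives identified earlier, namely half-period time shift and frequency doubling. These primitives generate precisely the Walsh/Hadamard hierarchy: doubling the pulse frequency of a length-$2^{k}$ pattern and alternating its two halves yields the order-$2^{k+1}$ family, each of whose rows is orthogonal to all lower ones and remains balanced. I would then show by induction that separating an additional color from all previously placed colors simultaneously forces one more doubling, so accommodating all $\chi(G)$ colors requires descending to depth $\chi(G)$ of this hierarchy; at that depth the idle window carries the full order-$2^{\chi(G)}$ Walsh system and exhibits $2^{\chi(G)}$ distinct syncopated sequences, which establishes the upper bound.

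The main obstacle is the matching lower bound: showing that no scheme built from the syncopation primitives can separate all $\chi(G)$ mutually-coupled color classes with fewer distinct sequences. The delicate point is to rule out reusing or recombining sequences across non-adjacent classes to collapse the hierarchy. I would argue that because each admissible primitive only halves periods (equivalently, doubles frequency), the orthogonality constraints among the clique of $\chi(G)$ colors are independent and each consumes a fresh binary degree of freedom, so the generated pattern family cannot avoid instantiating the full $2^{\chi(G)}$ Walsh leaves. Making this independence argument airtight, rather than merely exhibiting the construction, is where the real work lies; the remaining steps, the toggling-frame reduction and the recursive construction itself, are routine given Sec.~\ref{sec:superoperator} and the preceding theorem. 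A natural sanity check I would run first is the clique $G=K_n$, where $\chi(G)=n$, to confirm that the forced depth of the doubling hierarchy indeed reproduces the claimed count and to calibrate exactly which objects the statement means by ``distinct sequences.''
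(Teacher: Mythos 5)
Your toggling-frame reduction is sound and consistent with the paper's superoperator framework: with only $\pi_x$/$\pi_y$ pulses each qubit carries a balanced piecewise-constant sign pattern $s_v(t)\in\{\pm1\}$, decoupling an edge is orthogonality $\int_0^T s_u s_v\,dt=0$, and self-protection is balance $\int_0^T s_v\,dt=0$. The gap is in the induction you build on top of it. The step ``separating an additional color simultaneously forces one more doubling'' is false, and the paper's own illustration immediately after the corollary refutes it: the three sequences XX, XX-CPMG and XXXX are pairwise syncopating yet use only two frequency levels, because the half-period time shift supplies an additional orthogonal pattern at the \emph{same} frequency. In general, $n$ mutually orthogonal balanced $\pm1$ patterns already exist on $2^{\lceil\log_2(n+1)\rceil}$ time bins (the nonconstant Walsh functions), so the doubling depth needed to host $\chi(G)$ colors grows like $\log_2\chi(G)$, not $\chi(G)$. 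The sanity check you propose on cliques would in fact expose this: $K_3$ is fully decoupled by $3$ sequences, not $2^3=8$.

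You also conflate two different counts: the number of Walsh functions \emph{available} at a given depth of the hierarchy versus the number of distinct sequences \emph{assigned} to qubits. A proper coloring needs exactly one sequence per color class, i.e., $\chi(G)$ distinct pairwise-syncopating sequences, and this coloring argument is the entirety of the paper's justification --- the corollary is stated as an immediate consequence of the preceding NP-completeness theorem, with no separate proof beyond the three-sequence example. The figure $2^{\chi(G)}$ cannot be read as a count of required sequences consistently with that example; at best it bounds the pulse number of the longest sequence if one restricts to the pure frequency-doubling family XX, XXXX, $\ldots$ (with $2,4,\ldots,2^{\chi(G)}$ pulses) and forgoes time shifts. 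Consequently your lower-bound sketch --- that each orthogonality constraint consumes a fresh binary degree of freedom, forcing all $2^{\chi(G)}$ Walsh leaves to be instantiated --- is aimed at a false target and cannot be made airtight: the orthogonality constraints on a clique of $\chi(G)$ colors are simultaneously satisfiable by $\chi(G)$ patterns supported on only $O(\chi(G))$ time bins, so the constraints are not independent in the sense your argument requires.
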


For example, for a three-colored graph of ZZ crosstalks, the sequences XX, XX-CPMG and XXXX would remove all the ZZ crostalks. Note that all the sequences syncopate with each other.

\section{\label{sec:determining-g}Determining $g$}
\label{sec:appendix:chi}

The combination of Ramsey measurements and decoupling sequences we have described provides a highly accurate way to measure the coupling between a pair of qubits. In order to validate the experimental result from the main text, an alternative approach for obtaining $g$ is described based on the two-qubit gate physics.

Each coupled pair of qubits consists of a tunable-frequency qubit and a fixed-frequency qubit coupled by a transverse coupling, $gXX$. In the weakly-coupled limit in which the qubits are operated, this becomes an effective ZZ coupling. We refer readers to \cite{didier_analytical_2018} for a full treatment, and reproduce the main result below. The Hamiltonian for the interaction picture is well approximated by

\begin{equation}
\begin{split}
    \hat{H}_{int} = &g\sum^\infty_{n=-\infty}J_n \Bigl( \frac{\omega_T}{2\omega_p}\Bigr) e^{i(2n\omega_pt+\beta_n)} \times \Bigl\{ \\
    &e^{-i\Delta t} |10\rangle \langle 01| + \\
    &\sqrt{2} e^{-i(\Delta + \eta_F)t} |20\rangle \langle 11| + \\
    & \sqrt{2} e^{-i(\Delta - \eta_T)t} |11\rangle \langle 02|  \Bigl\}
\end{split}
\end{equation}

Where $J_n$ is the $n^{th}$ Bessel function of the first kind, $\Delta$ is the detuning and $\beta_n$ is the phase.

\begin{equation}
    \Delta = \bar{\omega}_T(\tilde{\Phi}) - \omega_F
\end{equation}

\begin{equation}
    \beta_n = (\tilde{\omega}_T / w\omega_p) sin(2\theta_p) + (2\theta_p + \pi)n
\end{equation}

We can see the Hamiltonian produces three resonance conditions.

\begin{equation}
    2n\omega_p = \Delta(\tilde{\Phi}) \rightarrow |10\rangle \leftrightarrow |01\rangle
\end{equation}

\begin{equation}
    2n\omega_p = \Delta(\tilde{\Phi}) - \eta_T \rightarrow |11\rangle \leftrightarrow |02\rangle
\end{equation}

\begin{equation}
    2n\omega_p = \Delta(\tilde{\Phi}) + \eta_T \rightarrow |11\rangle \leftrightarrow |20\rangle
\end{equation}

Each resonance has an effective coupling strength $g_{eff}^{(n)}$ which determines the Rabi frequency and the resonant linewidth of the interaction at the $n^{th}$ harmonic. This is given by the time-independent prefactor for each term in the Hamiltonian:

\begin{equation}
    g_{eff}^{(n)} = g J_n \Bigl( \frac{\tilde{\omega_T}}{2\omega_p} \Bigr) \leftarrow i\text{SWAP}
\end{equation}

\begin{equation}
    g_{eff}^{(n)} = \sqrt{2} g J_n \Bigl(\frac{\tilde{\omega}_T}{2\omega_p}\Bigr) \leftarrow i\text{CZ}
\end{equation}

A parametric drive that resonantly couples two levels produces swapping in the subspace of those two levels, described by 

\begin{equation}
    \hat{U} = \begin{pmatrix}
    \cos{(\theta/2)} & ie^{-i\phi}\sin{(\theta/2)}\\
    ie^{i\phi}\sin{(\theta/2)} & \cos{(\theta/2)}
    \end{pmatrix}
\end{equation}

where the population exchange, $\theta$ is given by

\begin{equation}
    \theta = 2 \int_0^\tau g_{eff}(t) dt  
\end{equation}

Thus, $g_{eff}$ can be straightforwardly determined from the iSWAP gate time. In the case of qubits 101 and 102, the iSWAP time is 160ns, which corresponds to a $g_{eff}$ of 1.56 MHz. The modulation frequency is 519.23 MHz, corresponding to a AC flux amplitude of 0.602 $\Phi_0$.

The re-normalization constant, $r$ is thus 

\begin{equation}
    r = J_{n}(\frac{\tilde{\omega_T}}{2\omega_p}) = 0.135\;.
\end{equation}


This corresponds to a g of 12.42 MHz, which is in close agreement with the value determined via the dispersive shift (11.3 Mhz).

\end{document}